\newtheorem{theorem}{Theorem}[section]
\newtheorem{lemma}[theorem]{Lemma}
\DeclareMathOperator*{\argmax}{arg\,max}
\newcommand{\bea}{\begin{eqnarray*}}
\newcommand{\eea}{\end{eqnarray*}}
\newcommand{\bean}{\begin{eqnarray}}
\newcommand{\eean}{\end{eqnarray}}
\newcommand{\lra}{\longrightarrow}
\newcommand{\sg}{\Sigma}
\newcommand{\what}{\widehat}
\newcommand{\bbR}{\mathbb{R}}
\newcommand{\bbE}{\mathbb{E}}
\begin{document}
\doublespacing
\title{\large \textbf{Bayesian inference on hierarchical nonlocal priors in generalized linear models} }
\author[1]{Xuan Cao}
\author[2]{Kyoungjae Lee\footnote{Corresponding author.} }
\affil[1]{Department of Mathematical Sciences, University of Cincinnati}
\affil[2]{Department of Statistics, Sungkyunkwan University}

\maketitle
\begin{abstract}
Variable selection methods with nonlocal priors have been widely studied in linear regression models, and their theoretical and empirical performances have been reported.
However, the crucial model selection properties for hierarchical nonlocal priors in high-dimensional generalized linear regression have rarely been investigated. 
In this paper, we consider a hierarchical nonlocal prior for high-dimensional logistic regression models and investigate theoretical properties of the posterior distribution.
Specifically, a product moment (pMOM) nonlocal prior is imposed over the regression coefficients with an Inverse-Gamma prior on the tuning parameter. 
Under standard regularity assumptions, we establish strong model selection consistency in a high-dimensional setting, where the number of covariates is allowed to increase at a sub-exponential rate with the sample size. 
We implement the Laplace approximation for computing the posterior probabilities, and a modified shotgun stochastic search procedure is suggested for efficiently exploring the model space. 
We demonstrate the validity of the proposed method through simulation studies and an RNA-sequencing dataset for stratifying disease risk.

\end{abstract}

Key words: High-dimensional, nonlocal prior, strong selection consistency

\section{Introduction} \label{sec:introduction}
The advance in modern technology has led to an increased ability to collect and store data on a large scale. 
This brings opportunities and, at the same time, tremendous challenges in analyzing data with a large number of covariates  per observation, the so-called high-dimensional problem.  
In high-dimensional analysis, variable selection is one of the very important tasks and commonly used techniques, especially in radiological and genetic research, with the high-dimensional data naturally extracted from imaging scans and gene sequencing.  
There is an extensive frequentist literature on variable selection, especially ones that are based on regularization techniques enforcing sparsity through penalization functions that share the common property of shrinkage toward sparse models \citep{Tibshirani:lasso,Fan:SCAD,zhang2010}.  
On the other hand, Bayesian model selection expresses the sparsity through a sparse prior, such as the popular spike and slab prior \citep{ishwaran2005spike,George1993,narisetty2014bayesian} and continuous shrinkage prior \citep{Liang:mixture:2008,johnson2012bayesian,liang2013bayesian}, i.e., a distribution that supports on the sparse model or model with sparse parameters, and inference is carried out through posterior inference. 

In this paper, we are interested in nonlocal priors \citep{John:Rossell:non-localtesting:2010} that are identically zero whenever a model parameter is equal to its null value. Compared to local priors, nonlocal prior distributions have relatively appealing properties for Bayesian model selection. Specifically, nonlocal priors discard spurious covariates faster as the sample size grows, while preserving exponential learning rates to detect nontrivial coefficients \citep{John:Rossell:non-localtesting:2010}. Under the setup of linear regression with $p$ predictors,
\citet{johnson2012bayesian} introduced the product moment (pMOM) nonlocal prior with density
\begin{equation} 
\label{pmomdensity_introduction}
d_p(2\pi)^{-\frac p 2}(\tau \sigma^2)^{-rp - \frac p 2} |U_p| ^{\frac 1 2} \exp \Big(- \frac{ \beta_p ^ \top U_p  \beta_p}{2 
	\tau \sigma ^2}\Big)\prod_{i =1}^{p} \beta_{i}^{2r}. 
\end{equation}
\noindent
Here $U_p$ is a $p \times p$ nonsingular matrix, $r$ is a positive integer referred to as the order of the density and 
$d_p$ is the normalizing constant independent of the scale parameter $\tau$ and the variance $\sigma^2$. Variations of the density in 	
(\ref{pmomdensity_introduction}), called the piMOM and peMOM density, have also been developed in 
\cite{johnson2012bayesian} and \cite{RTJ:2013}. 
Under regularity conditions, \citet{johnson2012bayesian, Shin2018Sinica} and \citet{caoEntropy2020} demonstrated that the posterior distributions based on the pMOM and piMOM nonlocal prior densities can achieve strong model selection consistency in high-dimensional settings.
It implies that the posterior probability assigned to the true model converges to $1$ as the sample size grows. 
When the number of covariates is much smaller than the sample size, \cite{Shi:nonlocal} established the posterior convergence rate of the probability regarding the Hellinger distance between the posterior model and the true model under pMOM priors in a logistic regression model.

In the pMOM prior \eqref{pmomdensity_introduction}, the hyperparameter $\tau$ controls the dispersion of the density around the origin, and thus implicitly determines the magnitude of the regression coefficients that will be shrunk to zero \citep{johnson2012bayesian}. 
\citet{wu2020hyper} and \citet{Cao2020BA} extended the work in \cite{johnson2012bayesian} and \cite{Shin2018Sinica} by proposing a fully Bayesian approach with the pMOM nonlocal prior and an appropriate Inverse-Gamma prior on the hyperparameter $\tau$ referred to as the hyper-pMOM prior. 
In particular, \cite{wu2020hyper} investigated model selection properties of hyper-pMOM priors in generalized linear models (GLMs) under a fixed dimension $p$, and \citet{Cao2020BA} established strong model selection consistency of hyper-pMOM priors in linear regression when $p$ is allowed to grow at a polynomial rate of $n$. For the hyper-piMOM priors composed of the mixture of piMOM and Inverse-Gamma densities, \citet{BW:2017} established model selection consistency in generalized linear models under rather restrictive assumptions.

Despite recent developments in model selection using nonlocal priors, a rigorous Bayesian inference of hyper-pMOM priors in GLMs has not been undertaken to the best of our knowledge. 
Motivated by this gap, we establish model selection consistency of the hyper-pMOM prior on regression coefficients in a GLM, in particular, logistic regression when the number of covariates grows at a sub-exponential rate of the sample size (Theorems \ref{thm:nosuper} to \ref{thm:selection}). 
Furthermore, it is known that the computation problem can arise for Bayesian approaches due to the non-conjugate nature of priors in GLMs. 
To address this issue, we obtain posterior probabilities via Laplace approximation and then implement a slightly modified shotgun stochastic search algorithm for exploring the sparsity pattern of the regression coefficients. 
We demonstrate that the proposed method can outperform existing state-of-the-art methods including both penalized likelihood and Bayesian approaches in various settings. 
Finally, the proposed method is applied to an RNA-sequencing dataset consisting of gene expression levels to identify differentially expressed genes for disease risk stratification.

The rest of paper is organized as follows. Section \ref{sec:model spec} provides background material regarding GLMs and revisits the hyper-pMOM distribution.
We detail strong selection consistency results in Section \ref{sec:selection cons}, and proofs are provided in the supplement. The posterior computation algorithm is described in Section \ref{sec:computation}, and we show the performance of the proposed method and compare it with other competitors through simulation studies in Section \ref{sec:sim}. In Section \ref{sec:real}, we conduct a data analysis for predicting asthma and show that the hyper-pMOM prior yields better prediction performance compared with other contenders. We conclude with a discussion in Section \ref{sec:disc}.

\section{Methodology} \label{sec:model spec}
\subsection{Variable Selection in Logistic Regression}
We first describe the framework and introduce some notations for Bayesian variable selection in logistic regression. 
Let $ y \in \{0,1\}^{n}$ be the binary response vector and $ X \in \bbR^{n\times p}$ be the design matrix. Without loss of generality, we assume that the columns of $X$ are standardized to have zero mean and unit variance. 
Let $ x_i \in \bbR^p$ denote the $i$th row vector of $X$ that contains the covariates for the $i$th subject. Let $ \beta$ be the $p \times 1$ vector of regression coefficients.
We first consider the standard logistic regression model:
\begin{align} \label{logistic_model}
P \big(y_i = 1 \mid  x_i,  \beta \big) = \frac{\exp\big( x_i^\top \beta\big)}{1+\exp\big( x_i^\top \beta\big)}, \quad \text{ for } i = 1,2, \ldots, n .
\end{align}
We present a scenario where the dimension of predictors, $p$, grows with the sample size $n$. 
Thus, the number of predictors is a function of $n$, that is, $p = p_n$, but we denote it as $p$ for notational simplicity.
The goal of this paper is variable selection, i.e., to correctly identify all the locations of nonzero regression coefficients. 

We denote a model by $ k = \left\{k_1, k_2, \ldots, k_{| k|}\right\} \subseteq [p] =: \{1,2, \ldots, p \}$ if and only if all the nonzero elements of $ \beta$ are $\beta_{k_1}, \beta_{k_2}, \ldots, \beta_{k_{| k|}}$, where $| k|$ is the cardinality of $ k$. 
For any $\beta \in \bbR^p$ and $k \subseteq [p]$,  let $ \beta_{ k} = \big(\beta_{k_1}, \beta_{k_2}, \ldots, \beta_{k_{| k|}}\big)^\top \in \bbR^{|k|}$. 
Similarly, for any $m \times p$ matrix $A$ and $k \subseteq [p]$, let $ {A}_{ k} \in  \bbR^{m\times | k|}$ denote the submatrix of $A$ containing the columns of $ A$ indexed by model $ k$. In particular, for any $1 \le i \le n$ and $k \subseteq [p]$, we denote $ x_{i k} \in \bbR^{|k|}$ as the subvector of $ x_{i} \in \bbR^p$ containing the entries of $ x_{i}$ corresponding to model $ k$.

\subsection{Hierarchical Nonlocal Priors}
The class of the following hierarchical nonlocal priors can be used for variable selection:
\bean
\pi\left( {\beta_k} \mid \tau,  k\right) &=& d_{ k}(2\pi)^{-\frac {| k|} 2}(\tau)^{-r| k| - \frac {| k|} 2} | {U_k}| ^{\frac 1 2} \exp \Big(- \frac{ {\beta_k}^\top  {U_k} {\beta_k}}{2 \tau}\Big)\prod_{i =1}^{| k|} \beta_{k_i}^{2r}, \quad \beta_k \in \bbR^{|k|} , \label{model:pmom}\\
\pi(\tau) &=& \frac{\lambda_2^{\lambda_1}}{\Gamma(\lambda_1)}\tau^{-\lambda_1 - 1} \exp\Big(-\frac{\lambda_2}\tau\Big), \quad \tau >0 , \label{model:tau}
\eean
where $U$ is a $p \times p$ nonsingular matrix, $r$ is a positive integer and $\lambda_1, \lambda_2$ are positive constants.
we refer to the mixture of densities of pMOM and Inverse-Gamma in (\ref{model:pmom}) and (\ref{model:tau}) as the {hyper-pMOM prior} \citep{wu2020hyper,Cao2020BA}. 
It is easy to see that the marginal density of $ \beta_k$, after integrating out $\tau$, has the following form:
\bea
\pi\left( \beta_k \mid k\right)
&=& \int \pi(\beta_k \mid \tau, k ) \pi(\tau)  d \tau \\
&=&\frac{\lambda_2^{\lambda_1}}{\Gamma(\lambda_1)} \frac{\Gamma(r|k|+\frac {|k|} 2 +\lambda_1)}{(\lambda_2 + \frac{ \beta_k ^ \top U_k  \beta_k}{2})^{r|k| + \frac {|k|} 2 + \frac 1 2}}d_k(2\pi)^{-\frac {|k|} 2}|U_k| ^{\frac 1 2} \prod_{i =1}^{|k|} \beta_{k_i}^{2r}.
\eea
Compared to the pMOM density in \eqref{model:pmom} with given $\tau$, $\pi\left( \beta_k \mid k\right)$ possesses thicker tails and could achieve better model selection performance especially for small samples.  
For more details, see \cite{Liang:mixture:2008}, for example, that investigates the finite sample performance of hyper-$g$ priors.

For the prior over the model space, we suggest using the following uniform prior and restricting the analysis to models with a size of less than or equal to $m_n$:
\bean
\pi(k) &\propto& \mathbbm{1}(|k| \le m_n).  \label{uniform_prior}
\eean
Similar structure has also been considered in \cite{Naveen:2018,Shin2018Sinica} and \cite{Cao2020BA}. 
As an alternative to the uniform prior \eqref{uniform_prior}, one may also consider the complexity prior \citep{castillo2015bayesian}. However, as noted in \cite{Shin2018Sinica}, the penalty over large models can be derived directly from the nonlocal densities themselves without the extra penalization through the prior over the model space. 
In particular,  \cite{Cao2020BA} conducted simulation studies to compare the model selection results under a uniform prior and a complexity prior, and they showed the superior performance of model selection under a uniform prior.

Note that in the hierarchical nonlocal prior \eqref{logistic_model} to \eqref{uniform_prior}, no specific conditions have yet been assigned to the hyperparameters. 
Some standard regularity assumptions on the hyperparameters will be provided in Section \ref{sec:selection cons}. 

By the hierarchical model \eqref{logistic_model} to \eqref{uniform_prior} and Bayes' rule, the resulting posterior probability for model $ k$ is denoted by
\begin{align*}
\pi( k \mid  y) = \frac{\pi( k)}{m( y)}m_{ k}( y),
\end{align*}
where $m( y)$ is the marginal density of $y$, and $m_{ k}( y)$ is the marginal density of $ y$ under 
model $ k$ given by
	\bean \label{marginal}
	\begin{split}
		m_{ k}( y) &= \iint \exp \big\{L_n( {\beta_{k}}) \big\} \pi\left( {\beta_{k}} \mid  \tau, k\right) \pi(\tau) \, d\tau  d {\beta_{k}}   \\
		&=   \int \exp \big\{L_n( {\beta_{k}}) \big\} \pi\left( {\beta_{k}} \mid   k\right) \,d {\beta_{k}} , 
	\end{split}
\eean 
where 
\bean \label{loglikehood_logistic}
L_n( {\beta_{k}}) = \log\Bigg(\prod_{i =1}^n\bigg\{\frac{\exp\big( x_{i k}^\top {\beta_k}\big)}{1+\exp\big( x_{i k}^\top {\beta_k}\big)}\bigg\}^{y_i}\bigg\{\frac{1}{1+\exp\big( x_{i k}^\top {\beta_k}\big)}\bigg\}^{1-y_i}\Bigg)
\eean 
is the log-likelihood function. 
The above marginal posterior probabilities for model $k$ can be used to find the posterior mode,
\begin{equation} \label{a4}
\hat{ k} =  \argmax_{ k} \pi({ k} \mid  y).
\end{equation} 
The closed form of these posterior probabilities cannot be obtained due to the non-conjugate nature of nonlocal densities. Therefore, special efforts need to be devoted for both consistency results and computational strategy as we shall see in the following sections. 
In Section \ref{sec:computation}, we will adopt a (modified) stochastic search algorithm that utilizes posterior probabilities to target the mode in a more efficient way compared with Markov chain Monte Carlo (MCMC).

\subsection{Extension to Generalized Linear Model}
In this section, we extend our previous discussion on logistic regression to a GLM. Given predictors $ x_i$ and an outcome $y_i$ for $1 \le i \le n$, a GLM has a probability density function or probability mass function of the form
$$p(y_i\mid\theta_i ) = \exp\big\{a(\theta_i)y_i + b(\theta_i) + c(y_i)\big\},$$
in which $a(\cdot)$ is a continuously differentiable function with respect to $\theta$ with nonzero derivative, $b(\cdot)$ is also a continuously differentiable function of $\theta$, $c(\cdot)$ is some constant function of $y$, and $\theta_i = \theta_i(\beta) =  x_i^\top  \beta$ is the natural parameter.

The class of hierarchical pMOM densities specified in \eqref{model:pmom} and \eqref{model:tau} can still be used for model selection in the generalized setting by noting that the log-likelihood function in \eqref{marginal} and \eqref{loglikehood_logistic} now takes the general form of 
\bea
L_n({\beta_k}) = \sum_{i = 1}^n\big\{a(\theta_i({\beta_k}))y_i + b(\theta_i({\beta_k})) + c(y_i)\big\}.
\eea
Using similar techniques in Section \ref{sec:computation}, one can also develop efficient search algorithms based on different log-likelihood functions to navigate the posterior mode through the model space.

\section{Main Results}\label{sec:selection cons}

In this section, we show that the hyper-pMOM prior enjoys desirable model selection properties in a GLM.
Let $ t = \{ t_1, t_2,\ldots, t_{|t|}\} \subseteq [p]$ be the true model, which means that the nonzero locations of the true coefficient vector are $ t = (j , j \in  t)$. We consider $| t|$ to be a fixed value. Let $ \beta_0 \in \bbR^p$ be the true coefficient vector and $ \beta_{0,  t} \in \bbR^{| t|}$ be the vector of the true nonzero coefficients.
For a given model $ k \subseteq [p]$, we denote ${L_n( {\beta_{k}})}$ and $ s_n( {\beta_{k}}) = \partial L_n( {\beta_{k}}) /(\partial  {\beta_{k}})$ as the log-likelihood and score function, respectively. 
In the following analysis, we will focus on logistic regression, but our argument can be extended to any other GLMs such as a probit regression model by imposing certain conditions on the design matrix to effectively bound the Hessian matrix.
Let
\bea
 H_n(  {\beta_{k}}) &=& - \frac{\partial^2 L_n( {\beta_{k}}) }{\partial  {\beta_{k}} \partial  {\beta_{k}}^\top} \,\,=\,\, \sum_{i=1}^n \sigma_i^2( {\beta_{k}})  x_{ik}  x_{ik}^\top \,\,=\,\,  {X_{k}}^\top  \sg( {\beta_{k}})  {X_{k}}
\eea
be the negative Hessian of $L_n( {\beta_{k}})$, where $ \sg( {\beta_{k}}) \equiv  \sg_{k} = {\rm diag}(\sigma_1^2( {\beta_{k}}),\ldots, \sigma_n^2( {\beta_{k}}))$, $\sigma_i^2( {\beta_{k}}) = \mu_i( {\beta_{k}})(1- \mu_i( {\beta_{k}}))$ and 
\bea
\mu_i( {\beta_{k}}) &=& \frac{\exp \big( x_{i  k}^\top  {\beta_{k}} \big) }{1+ \exp \big( x_{i  k}^\top  {\beta_{k}} \big)}.
\eea
In the rest of the paper, we denote $ \sg =  \sg ( {\beta_{0,t}})$ and $\sigma_i^2 = \sigma_i^2( {\beta_{0,t}})$ for simplicity.

Before establishing our main results, we introduce the following notation. 
For any $a, b \in \bbR$, $a\vee b$ and $a \wedge b$ mean the maximum and minimum of $a$ and $b$, respectively.
For any positive real sequences $a_n$ and $b_n$, we denote $a_n \lesssim b_n$, or equivalently $a_n =O(b_n)$, if there exists a constant $C>0$ such that $|a_n| \le C |b_n|$ for all large $n$.
We denote $a_n \ll b_n$, or equivalently $a_n =o(b_n)$, if $a_n / b_n \lra 0$ as $n\to\infty$.
We denote  $a_n \sim b_n$, if there exist constants $C_1>C_2>0$ such that $C_2 < b_n/ a_n \le a_n / b_n < C_1$.
The $\ell_2$-norm for a given vector $ v = (v_1,v_2,\ldots, v_p)^\top \in \bbR^p$ is defined as $\| v\|_2 = ( \sum_{j=1}^p v_j^2 )^{1/2}$.
For any real symmetric matrix $ A$, let $\lambda_{\max}( A)$ and $\lambda_{\min}( A)$ be maximum and minimum eigenvalue of $ A$, respectively.
We assume the following standard conditions for obtaining the asymptotic results: \\
\noindent{\bf Condition (A1)}\label{cond_A1} $\log n \lesssim \log p =o(n)$ and $m_n = O\Big((n /\log p)^{\frac{1-d'}{2}}  \wedge p\Big)$, as $n\to \infty$.\\
\noindent{\bf Condition (A2)}\label{cond_A2} For some constant $C >0$ and $0 \le d < (1+d)/2 \le d' \le 1$,
\bea
\max_{i,j} |x_{ij}| &\le& C , \\
0 < \lambda  \le \min_{ k : | k| \le m_n} \lambda_{\min}\Big( n^{-1}H_n( \beta_{0,  k}) \Big) 
&\le& \Lambda_{m_n} \le C^2 \Big( \frac{n}{\log p} \wedge \log p\Big)^d,
\eea 
 and $\Lambda_{\zeta} = \max_{ k : | k| \le \zeta} \lambda_{\max} ( n^{-1}   X_{ k}^\top  X_{ k})$ for any integer $\zeta >0$. 
Furthermore, $\| \beta_{0, t}\|_2^2 = O\big((\log p)^d \big)$.\\
\noindent{\bf Condition (A3)}\label{cond_A3} 
For some  constant $c_0>0$,
\bea
\min_{j \in  t}  \|\beta_{0, j}\|_2^2 &\ge&  c_0 | t| \Lambda_{| t|} \Big(\frac{\log p}{n} \vee \frac 1{\log p}\Big).
\eea
\noindent{\bf Condition (A4)}\label{cond_A4} For some constants $ \delta, a_1, a_2 > 0$, the hyperparameters satisfy 
\bea
 a_1 < \lambda_1 < a_2, \,\,\, \lambda_2^{r + 1/2} \sim  n^{-1/2} p^{2+\delta} \,\mbox{ and }\,  a_1 < \lambda_{\min}(U) \le \lambda_{\max}(U) < a_2.
\eea
Condition \hyperref[cond_A1]{\rm (A1)} ensures our proposed method can accommodate high dimensions where the number of predictors grows at a sub-exponential rate of $n$.
Condition \hyperref[cond_A1]{\rm (A1)} also specifies the parameter $m_n$ in the uniform prior \eqref{uniform_prior} that restricts our analysis on a set of {\it reasonably large} models.
Similar assumptions restricting the model size have been commonly assumed in the sparse estimation literature \citep{liang2013bayesian,Naveen:2018,Shin2018Sinica,LLL:2019}.

Condition \hyperref[cond_A2]{\rm (A2)} gives  lower and upper bounds of $\lambda_{\min} \big(n^{-1} H_n( \beta_{0, k} ) \big)$ and $\lambda_{\max} \big(n^{-1}  X_{ k}^\top  X_{ k} \big)$, respectively, where $k$ belongs to the set of reasonably large models.
The lower bound condition can be seen as a restricted eigenvalue condition for $k$-sparse vectors and is satisfied with high probability for sub-Gaussian design matrices \citep{Naveen:2018}.
Similar conditions have been used in the linear regression literature \citep{ishwaran2005spike,yang2016computational,song2017nearly}. 

Condition \hyperref[cond_A2]{\rm (A2)} also allows the magnitude of true signals to increase to infinity but stay bounded above by $(\log p)^d$ up to some constant, while Condition \hyperref[cond_A3]{\rm (A3)}, the well-known {\it beta-min} condition, gives a lower bound for nonzero signals. In general, this type of condition is necessary to not neglect any small signals. 

Condition \hyperref[cond_A4]{\rm (A4)} suggests appropriate conditions for the hyperparameters in \eqref{model:pmom} and \eqref{model:tau}. Similar assumption has also been considered in \cite{Shin2018Sinica}, \cite{johnson2012bayesian} and \cite{Cao2020BA}. In particular, we extend the previous polynomial rate of the dimension in \cite{Cao2020BA} by considering a larger order of the hyperparameter $\lambda_2$.

\subsection{Model Selection Consistency}

\begin{theorem}[No super set]\label{thm:nosuper}
	Under Conditions \hyperref[cond_A1]{\rm (A1)}, \hyperref[cond_A2]{\rm (A2)} and \hyperref[cond_A4]{\rm (A4)}, 
	\bea
	\pi  \big( k \supsetneq  t \mid  y\big) &\overset{P}{\lra}& 0  , \,\, \text{ as } n\to\infty.
	\eea
\end{theorem}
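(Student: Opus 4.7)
The plan is to reduce the superset posterior mass to a Bayes-factor sum and control it via Laplace approximation. Because the model prior \eqref{uniform_prior} is uniform on $\{k : |k| \le m_n\}$ and $m(y) \ge \pi(t)\, m_t(y)$, one immediately gets
\[
\pi(k \supsetneq t \mid y) \;\le\; \sum_{\substack{k \supsetneq t \\ |k| \le m_n}} \frac{m_k(y)}{m_t(y)},
\]
so it suffices to show that this sum tends to $0$ in probability.

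Next I would approximate each marginal $m_k(y)$ by Laplace's method about the MLE $\hat{\beta}_k$ of $L_n$ restricted to model $k$. Condition (A2) guarantees that $L_n$ is strictly concave with $n^{-1}H_n(\beta_k)$ uniformly well conditioned on $\{k : |k| \le m_n\}$, so by standard M-estimation $\hat{\beta}_k$ exists, is unique, and satisfies $\max_{j \in k}|\hat{\beta}_{k,j} - \beta_{0,k,j}| = O_P(\sqrt{\log p /n})$ uniformly in $k$. A careful Laplace expansion, using that the pMOM density acquires rational-polynomial tails once $\tau$ is integrated out, then gives, with high probability and uniformly in $k$,
\[
m_k(y) \;=\; (1 + o(1))\cdot \exp\{L_n(\hat{\beta}_k)\}\, \pi(\hat{\beta}_k \mid k)\, (2\pi)^{|k|/2}\, |H_n(\hat{\beta}_k)|^{-1/2}.
\]
Inserting this into the Bayes factor $m_k(y)/m_t(y)$ and using a quadratic expansion of $L_n$ at $\hat{\beta}_t$ splits the bound into three pieces: (i) a likelihood ratio $\exp\{L_n(\hat{\beta}_k) - L_n(\hat{\beta}_t)\}$, controlled by a uniform Wilks-type $\chi^2_{|k|-|t|}$ bound using Condition (A2); (ii) a Hessian determinant ratio bounded between $(n\lambda)^{-(|k|-|t|)/2}$ and $(n\Lambda_{m_n})^{-(|k|-|t|)/2}$; and (iii) the prior ratio, whose key piece is the nonlocal factor $\prod_{j \in k \setminus t} \hat{\beta}_{k,j}^{2r}$, bounded with high probability by $(\log p / n)^{r(|k|-|t|)}$. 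Combined with the hyperprior contribution $\lambda_2^{(r+1/2)(|k|-|t|)} \sim (n^{-1/2} p^{2+\delta})^{|k|-|t|}$ coming from Condition (A4), the Bayes factor is, with high probability, bounded by $(C p^{-1-\delta/2})^{|k|-|t|}$ for some absolute constant $C$. A union bound over supersets of extra size $j$ using $\binom{p-|t|}{j} \le p^j$ converts the sum into a geometric series with ratio $C p^{-\delta/2} \to 0$, which settles the claim.

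The main obstacle is making the Laplace expansion uniform across the combinatorially many supersets of $t$. The pMOM density vanishes on every coordinate hyperplane, so the maximum of the integrand $\exp(L_n)\pi(\cdot \mid k)$ is not literally at $\hat{\beta}_k$ but at a shifted point, and for $j \in k \setminus t$ the MLE lies within $O_P(\sqrt{\log p / n})$ of those hyperplanes; one must therefore verify that the shift is negligible and that the Gaussian approximation does not collide with the prior's zero. Moreover, the probabilistic control of $\|\hat{\beta}_k - \beta_{0,k}\|_\infty$, the score, and the Hessian must hold \emph{simultaneously} across all $O(p^{m_n})$ models, which requires sharp sub-Gaussian concentration tied to the sub-exponential growth of $p$ in Condition (A1) and the upper eigenvalue bound $\Lambda_{m_n}$ in Condition (A2).
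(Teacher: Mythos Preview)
Your overall route---reduce to a sum of Bayes factors over strict supersets, approximate each $m_k(y)$ around the MLE $\hat\beta_k$, control $L_n(\hat\beta_k)-L_n(\hat\beta_t)$ by a uniform Wilks-type bound, extract a $p^{-(2+\delta)}$-per-extra-variable penalty from the $\lambda_2$ scaling in Condition~(A4), and sum geometrically using $\binom{p-|t|}{j}\le p^j$---is exactly the paper's strategy. Two points, however, deserve correction.

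\medskip
\textbf{The Laplace step as you state it does not go through.} The assertion
\[
m_k(y)\;=\;(1+o(1))\,\exp\{L_n(\hat\beta_k)\}\,\pi(\hat\beta_k\mid k)\,(2\pi)^{|k|/2}\,|H_n(\hat\beta_k)|^{-1/2}
\]
is not a valid Laplace expansion here, and the obstacle you flag---that the pMOM factor $\prod_j\beta_{k_j}^{2r}$ vanishes on coordinate hyperplanes arbitrarily close to $\hat\beta_{k,j}$ for $j\in k\setminus t$---is not resolved by ``verifying that the shift is negligible.'' The shift of the mode is \emph{not} negligible on the relevant scale: for $j\in k\setminus t$ one has $\hat\beta_{k,j}=O_P(\sqrt{\log p/n})$ while the Gaussian width is $\asymp n^{-1/2}$, so the polynomial factor varies by an unbounded multiplicative amount over the region where the likelihood is essentially flat. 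The paper does not attempt a pointwise Laplace approximation. Instead it (i) sandwiches $L_n(\beta_k)-L_n(\hat\beta_k)$ between two quadratic forms in $\beta_k-\hat\beta_k$ on a ball $B$ of radius $\asymp\sqrt{|k|\Lambda_{|k|}\log p/n}$ (via Lemmas~A.1 and A.3 of \cite{Naveen:2018}), (ii) on $B$ integrates the resulting Gaussian against $\prod_i\beta_{k_i}^{2r}$ \emph{exactly} as a moment, bounding $E_k\!\big[\prod_i\beta_{k_i}^{2r}\big]$ by Lemma~6 of the supplement to \cite{johnson2012bayesian}, and (iii) on $B^c$ uses concavity of $L_n$ to get a uniformly small contribution. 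The lower bound for $m_t(y)$ is obtained analogously with Lemma~4 of \cite{johnson2012bayesian} and an auxiliary control of the cross term from completing the square. This sidesteps the mode-location issue entirely; the nonlocal penalty enters through the moment bound $E_k[\prod\beta_{k_i}^{2r}]\lesssim (\log p/|k|)^{r|k|}$ (using only $\|\hat\beta_k\|_2^2\lesssim\log p$), not through evaluating $\prod_{j\in k\setminus t}\hat\beta_{k,j}^{2r}$ at the MLE.

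\medskip
\textbf{The $\lambda_2$ exponent is reversed in your write-up.} After integrating out $\tau$, the marginal prior carries a factor $\lambda_2^{-(r+1/2)|k|}$ (up to lower order), so the Bayes factor picks up $\lambda_2^{-(r+1/2)(|k|-|t|)}\sim (n^{1/2}p^{-(2+\delta)})^{|k|-|t|}$, which is the source of the $p^{-(2+\delta)}$ penalty. Your displayed ``$\lambda_2^{(r+1/2)(|k|-|t|)}$'' has the wrong sign, though your final bound $(Cp^{-1-\delta/2})^{|k|-|t|}$ is consistent with the correct direction.
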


Theorem \ref{thm:nosuper} says that, asymptotically, our posterior does not overfit the model, i.e., it does not include unnecessarily many variables.
Of course, the result does not guarantee that the posterior will concentrate on the true model.
To capture every significant variable, we require the magnitudes of nonzero entries in $ \beta_{0,  t}$ not to be too small.
Theorem \ref{thm:ratio} shows that with an appropriate lower bound specified in Condition \hyperref[cond_A3]{\rm (A3)}, the true model $t$ will be the mode of the posterior.

\begin{theorem}[Posterior ratio consistency]\label{thm:ratio}
	Under Conditions \hyperref[cond_A1]{\rm (A1)}--\hyperref[cond_A4]{\rm (A4)} with $c_0 = \{(1-\epsilon_0)\lambda\}^{-1} \big[ 2(3+\delta) + 5 \{(1-\epsilon_0)\lambda\}^{-1}  \big]$ for some small constant $\epsilon_0>0$,
	\bea
	\max_{ k \neq  t} \frac{\pi  \big( k \mid  y \big)}{\pi  \big( t \mid  y\big)} &\overset{P}{\lra}& 0  , \,\, \text{ as } n\to\infty .
	\eea
\end{theorem}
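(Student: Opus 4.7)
The plan is to convert the posterior ratio into a marginal likelihood ratio, control it via a Laplace approximation, and close the argument with a union bound over the models that miss at least one true active variable. Because the model-space prior is uniform on $\{k : |k| \le m_n\}$, for any $k \ne t$ with both sizes at most $m_n$ we have $\pi(k)/\pi(t) = 1$, so $\pi(k \mid y)/\pi(t \mid y) = m_k(y)/m_t(y)$. Theorem~\ref{thm:nosuper} already handles the supersets $k \supsetneq t$, so it suffices to bound
\[
\sup_{k \ne t,\, k \not\supseteq t,\, |k| \le m_n}\; \frac{m_k(y)}{m_t(y)} \;\overset{P}{\lra}\; 0.
\]

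First, I would apply a Laplace expansion to both $m_k(y)$ and $m_t(y)$ about the model-specific MLEs $\hat{\beta}_k = \argmax_{\beta_k} L_n(\beta_k)$. Condition~\hyperref[cond_A2]{\rm (A2)} delivers the restricted strong concavity needed to make the approximation uniform over sparse models, giving
\[
m_k(y) = (1+o(1))\,(2\pi)^{|k|/2}\exp\{L_n(\hat{\beta}_k)\}\,|H_n(\hat{\beta}_k)|^{-1/2}\,\pi(\hat{\beta}_k \mid k),
\]
with $\pi(\beta_k \mid k)$ the $\tau$-integrated hyper-pMOM marginal density displayed just after \eqref{model:tau}. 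The resulting ratio splits into three pieces: (i) a log-likelihood gap $L_n(\hat{\beta}_k) - L_n(\hat{\beta}_t)$; (ii) a Hessian determinant ratio; and (iii) a prior density ratio.

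For (i), combining the restricted eigenvalue lower bound $\lambda_{\min}(n^{-1} H_n) \ge \lambda$ with a second-order expansion of $L_n$ in the enlarged model $k \cup t$ (so that $\hat{\beta}_k$ arises by zeroing out the coordinates indexed by $t\setminus k$), I would establish, on an event of probability tending to one,
\[
L_n(\hat{\beta}_t) - L_n(\hat{\beta}_k) \;\ge\; \tfrac{1}{2}(1-\epsilon_0)\,n\lambda\,\|\beta_{0,\, t\setminus k}\|_2^2 \;\ge\; \tfrac{1}{2}(1-\epsilon_0)\,n\lambda\,\min_{j \in t}\|\beta_{0,j}\|_2^{2},
\]
which by Condition~\hyperref[cond_A3]{\rm (A3)} is at least $\tfrac{1}{2}(1-\epsilon_0)\lambda c_0 |t|\Lambda_{|t|}\log p$. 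For (ii), the eigenvalue sandwich $\lambda \le \lambda_{\min}(n^{-1} H_n(\hat{\beta}_k)) \le \lambda_{\max}(n^{-1} H_n(\hat{\beta}_k)) \le \Lambda_{m_n}$ bounds the determinant ratio by $(n\Lambda_{m_n}/\lambda)^{(|k|+|t|)/2}$, contributing $O(m_n\log(n\Lambda_{m_n}))$ on the log scale. For (iii), using the explicit form of $\pi(\hat{\beta}_k \mid k)$, the calibration $\lambda_2^{r+1/2} \sim n^{-1/2} p^{2+\delta}$ from Condition~\hyperref[cond_A4]{\rm (A4)}, and boundedness in probability of $\|\hat{\beta}_k\|_\infty$ for $|k|\le m_n$, I would show the prior ratio is at most $\exp\{(2+\delta)(|k|-|t|)\log p + O(m_n)\}$. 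A union bound over the at most $p^{m_n}$ underfitted models then yields
\[
\log\sup_{k}\frac{m_k(y)}{m_t(y)} \;\le\; -\tfrac{1}{2}(1-\epsilon_0)\lambda c_0|t|\Lambda_{|t|}\log p + (3+\delta)|t|\log p + O\!\left(\tfrac{|t|}{\lambda} + m_n\log(n\Lambda_{m_n})\right),
\]
and the stated $c_0 = \{(1-\epsilon_0)\lambda\}^{-1}\bigl[2(3+\delta) + 5\{(1-\epsilon_0)\lambda\}^{-1}\bigr]$ is calibrated exactly so that the first term defeats the sum of the positive terms, giving the conclusion.

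The main obstacle will be the likelihood-gap step. Unlike linear regression, where this gap is an exact squared projection residual, the logistic log-likelihood is non-quadratic, so the bound must be extracted from a Taylor expansion whose remainder is controlled by $\Lambda_{m_n}$ and by the size of the missing signal $\|\beta_{0,t\setminus k}\|_2$, uniformly over a growing collection of sparse models. The exponent $d'$ in Condition~\hyperref[cond_A2]{\rm (A2)} is consumed precisely in bounding this Hessian remainder together with the Laplace-approximation error, and the additive $5\{(1-\epsilon_0)\lambda\}^{-1}$ correction inside $c_0$ reflects the propagation of those sub-leading terms through the ratio; keeping track of them cleanly across the union bound is where most of the technical work lies.
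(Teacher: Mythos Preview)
Your outline has the right overall architecture, but the key step---the Laplace expansion of $m_k(y)$ around the model-$k$ MLE $\hat\beta_k$ when $k \nsupseteq t$---does not go through under the paper's assumptions, and the likelihood-gap inequality you state is not correct in general.

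The Laplace approximation $m_k(y) = (1+o(1))\,(2\pi)^{|k|/2}\exp\{L_n(\hat\beta_k)\}\,|H_n(\hat\beta_k)|^{-1/2}\,\pi(\hat\beta_k\mid k)$ requires uniform control of the Hessian near $\hat\beta_k$ and consistency of $\hat\beta_k$ toward some fixed target. Condition~\hyperref[cond_A2]{(A2)} only bounds the eigenvalues of $H_n(\beta_{0,k})$, i.e.\ at the \emph{true} subvector, and the auxiliary MLE-consistency lemmas (Lemmas~A.1 and~A.3 of \cite{Naveen:2018}) apply only to models containing $t$. For a misspecified $k \nsupseteq t$, $\hat\beta_k$ converges to a pseudo-true point $\beta_k^\star \ne \beta_{0,k}$, and nothing in the conditions controls $H_n$ there; your $o(1)$ remainder is therefore unjustified. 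Relatedly, the inequality $L_n(\hat\beta_t)-L_n(\hat\beta_k)\ge \tfrac{1}{2}(1-\epsilon_0)n\lambda\|\beta_{0,t\setminus k}\|_2^2$ can fail outright: if $|k|>|t|$ and $k$ picks up noise variables, $L_n(\hat\beta_k)$ may exceed $L_n(\hat\beta_t)$ by overfitting, so the left side is negative while the right side is positive.

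The paper's proof avoids both problems by embedding $\beta_k$ into $\bbR^{|k^*|}$ with $k^*=k\cup t$ (zeros on $t\setminus k$) and expanding $L_n(\beta_{k^*})$ around $\hat\beta_{k^*}$. Because $k^*\supseteq t$, the Hessian sandwich and MLE consistency are available, and the quadratic term restricted to the subspace $\{\beta_{k^*}:\beta_{t\setminus k}=0\}$ automatically splits off the penalty $\exp\{-\tfrac{n(1-\epsilon)\lambda}{2}\|\hat\beta_{t\setminus k}\|_2^2\}$, where $\hat\beta_{t\setminus k}$ is the $t\setminus k$ block of $\hat\beta_{k^*}$. The remaining likelihood difference is $L_n(\hat\beta_{k^*})-L_n(\hat\beta_t)$, a \emph{superset} comparison already controlled by \eqref{loglike_diff}. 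The $5\{(1-\epsilon_0)\lambda\}^{-1}$ inside $c_0$ then arises not from a generic Laplace remainder but from the specific bound $\|\hat\beta_{t\setminus k}-\beta_{0,t\setminus k}\|_2^2\le c'w_n'^2$ with $c'=5\lambda^{-2}(1-\epsilon)^{-2}$. To fix your argument, replace the expansion around $\hat\beta_k$ by this embedding-into-$k^*$ device.
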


Posterior ratio consistency is a useful property especially when we are interested in the point estimation with the posterior mode, but does not provide how large is the probability that the posterior puts on the true model.
In the following theorem, we state that our posterior achieves {\it strong selection consistency}.
By strong selection consistency, we mean that the posterior probability assigned to the true model $t$ converges to 1, which requires a slightly stronger condition on the lower bound for the magnitudes of nonzero entries in $ \beta_{0,  t}$ compared to that in Theorem \ref{thm:ratio}.
\begin{theorem}[Strong selection consistency]\label{thm:selection}
	Under Conditions \hyperref[cond_A1]{\rm (A1)}--\hyperref[cond_A4]{\rm (A4)} with $c_0 = \{(1-\epsilon_0)\lambda\}^{-1}  \big[ 2(9 + 2\delta) + 5\{(1-\epsilon_0)\lambda\}^{-1}  \big]$ for some small constant $\epsilon_0>0$, the following holds:
	\bea
	\pi  \big( t \mid  y\big) &\overset{P}{\lra}& 1  , \,\, \text{ as } n\to\infty  .
	\eea
\end{theorem}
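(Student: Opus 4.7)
\textbf{Proof Proposal for Theorem \ref{thm:selection}.}

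The plan is to upgrade the pointwise posterior-ratio bound of Theorem \ref{thm:ratio} to a \emph{summable} uniform bound, since strong selection consistency is equivalent to
\[
\sum_{k \neq t,\, |k|\le m_n} \frac{\pi(k \mid y)}{\pi(t \mid y)} \overset{P}{\lra} 0 .
\]
The strengthening of the beta-min constant from $c_0 = \{(1-\epsilon_0)\lambda\}^{-1}[2(3+\delta)+5\{(1-\epsilon_0)\lambda\}^{-1}]$ in Theorem \ref{thm:ratio} to the enlarged constant stated in Theorem \ref{thm:selection} is exactly what allows each per-model bound to absorb the combinatorial count $\binom{p}{|k|}$.

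I would partition the competing models into overfitted supersets $\calS_1 = \{k \supsetneq t,\, |k|\le m_n\}$ and models missing at least one true variable $\calS_2 = \{k : t\not\subseteq k,\, k\neq t,\, |k|\le m_n\}$, and bound the two sums separately. For $\calS_1$, Theorem \ref{thm:nosuper} only gives $\pi(k\supsetneq t\mid y)\to 0$ without a rate; I would reopen its Laplace-approximation proof and use Condition (A4) --- in particular $\lambda_2^{r+1/2}\sim n^{-1/2}p^{2+\delta}$, together with the pMOM weight $\prod\beta_{k_i}^{2r}$ evaluated at the constrained MLE $\widehat\beta_k$ --- to show that each superfluous coordinate contributes a factor of order $p^{-(3+\delta)}$ to $\pi(k\mid y)/\pi(t\mid y)$. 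On a high-probability event, any $k\in\calS_1$ with $|k\setminus t|=s$ then satisfies $\pi(k\mid y)/\pi(t\mid y)\le p^{-(3+\delta)s}$, and summing against $\binom{p-|t|}{s}\le p^{s}$ yields $\sum_{k\in\calS_1} \pi(k\mid y)/\pi(t\mid y)= O(p^{-(2+\delta)}) = o(1)$.

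For $\calS_2$, I would introduce the auxiliary superset $k^\star = k\cup t$ and decompose
\[
\frac{\pi(k \mid y)}{\pi(t \mid y)} \;=\; \frac{\pi(k\mid y)}{\pi(k^\star\mid y)}\cdot\frac{\pi(k^\star\mid y)}{\pi(t\mid y)} .
\]
The second factor is handled by the $\calS_1$ bound since $k^\star\supsetneq t$. For the first factor, a second-order Taylor expansion of $L_n$ around the constrained MLE $\widehat\beta_{k^\star}$ together with the lower eigenvalue bound in (A2) shows that forcing the coordinates in $t\setminus k$ to zero costs the log-likelihood at least $(n\lambda/2)\|\beta_{0,t\setminus k}\|_2^2(1+o(1))$; Condition (A3) with the enlarged $c_0$ converts this exponential cost into a bound of the form $\pi(k\mid y)/\pi(k^\star\mid y)\le p^{-(2+\delta')|t\setminus k|}$ for some $\delta'>0$. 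Indexing $\calS_2$ by $a=|t\setminus k|\ge 1$ and $b=|k\setminus t|\ge 0$ and summing against $\binom{|t|}{a}\binom{p-|t|}{b}$ yields a convergent double series that is $o(1)$.

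The main obstacle will be producing a Laplace approximation for $m_k(y)$ that is simultaneously \emph{uniform} over all $k$ with $|k|\le m_n$ and \emph{tight} enough that the leading constants --- the $\lambda$ from $\lambda_{\min}(n^{-1}H_n(\cdot))$ in (A2) and the hyperparameter constants from (A4) --- precisely cancel the $\log p$-rate combinatorial count $\binom{p}{|k|}$. This requires (i) existence and proximity of the constrained MLE $\widehat\beta_k$ to $\beta_{0,k}$ on a single high-probability event valid over all such $k$, (ii) controlling the pMOM factor $\prod\beta_{k_i}^{2r}$ at $\widehat\beta_k$ --- bounded above via concentration of $\widehat\beta_k$ for overfitted coordinates and below via (A3) for true coordinates --- and (iii) tracking every constant through the Inverse-Gamma marginalization in $\tau$ so that the strengthened $c_0$ in Condition (A3) supplies exactly the additional margin that distinguishes Theorem \ref{thm:selection} from Theorem \ref{thm:ratio}.
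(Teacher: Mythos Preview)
Your proposal is essentially correct and follows the same route as the paper, with two small caveats worth noting. First, the proof of Theorem~\ref{thm:nosuper} already establishes the summed bound $\sum_{k\supsetneq t}PR(k,t)\overset{P}{\to}0$ directly (that is precisely how $\pi(k\supsetneq t\mid y)\to 0$ is obtained there), so nothing needs to be ``reopened'' for $\calS_1$. Second, for $\calS_2$ the paper does not literally pass through the product $\frac{\pi(k\mid y)}{\pi(k^\star\mid y)}\cdot\frac{\pi(k^\star\mid y)}{\pi(t\mid y)}$ --- which would be ill-defined whenever $|k^\star|>m_n$ since then $\pi(k^\star)=0$ --- but instead bounds $m_k(y)$ directly by Taylor-expanding $L_n(\beta_{k^\star})$ around $\hat\beta_{k^\star}$; the resulting upper bound on $PR(k,t)$ contains exactly the two structural pieces you isolate, namely the superset likelihood gap $\exp\{L_n(\hat\beta_{k^\star})-L_n(\hat\beta_t)\}$ and the beta-min penalty $\exp\{-\tfrac{n(1-\epsilon)\lambda}{2}\|\hat\beta_{t\setminus k}\|_2^2\}$, and the enlarged $c_0$ is used in the same way you describe to absorb the combinatorial count $\binom{|t|}{v}\binom{p-|t|}{|k|-v}$ when summing over $k\in\calS_2$.
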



\subsection{Comparison with Existing Work}

We compare our results and assumptions with those of existing methods using nonlocal priors in generalized linear regression. \citet{Shi:nonlocal} established the posterior convergence rate for nonlocal priors under the assumption of $p\log(1/\epsilon_n^2) \ll n\epsilon_n^2$ for some $\epsilon_n \in (0,1]$ satisfying $n\epsilon_n^2 \gg 1$, which indicates that $p$ can increase with the sample size but slower than $n$. \citet{wu2020hyper} investigated the model selection performance of hyper-nonlocal priors that combine the Fisher information matrix with the pMOM density and established asymptotic properties under a fixed dimension of predictors. Both works considered the setting of low to moderate dimensions, while we allow $p$ to grow at a sub-exponential rate of $n$, the so-called ``ultra high-dimensional" setting \citep{Shin2018Sinica}.

\cite{BW:2017} considered the following hyper-piMOM priors for regression coefficients in GLMs and established the high-dimensional model selection consistency:
\begin{eqnarray*}
	\beta_k \mid \tilde{\tau}_k &\sim& \prod_{i=1}^{|{ k}|}  
	\frac{(\tau_i \sigma^2)^{\frac{r }{2}}}{\Gamma(\frac r 2)} 
	|\beta_{k_i}|^{-(r+1)} \exp\Big(-\frac{\tau_i\sigma^2}{\beta_{k_i}^2}\Big),\\ 
	\tau_i &\overset{i.i.d.}{\sim}&  \mbox{Inverse-Gamma }\Big(\frac{r + 1}2 , \psi\Big) , \quad \text{ for } i =1,\ldots, |k| ,
\end{eqnarray*}
where $\tilde{\tau}_k = (\tau_1,\tau_2,\ldots, \tau_{|k|})^\top$.
In particular, the authors put an independent piMOM prior on each linear regression coefficient (conditional on the hyperparameter $\tau_i$) and an Inverse-Gamma prior on $\tau_i$. 

There are some fundamental differences between \cite{BW:2017} and our work in terms of the models considered and corresponding analysis. 
Firstly, unlike the piMOM prior, the pMOM prior in our model does not in general correspond to assigning an independent prior to each entry of $ \beta_k$. 
In particular, pMOM distributions introduce correlations among the entries in $ \beta_k$ through $U_k$ and create more theoretical challenges. 
Furthermore, the pMOM prior imposes exact sparsity in $ \beta_k$, which is not the case for the piMOM prior in \cite{BW:2017}, thus they are structurally different.
Secondly, \cite{BW:2017} assumed the eigenvalues of the Hessian matrix to be bounded below and above by some constants, while we allow the upper bound to grow with $n$ (Condition \hyperref[cond_A2]{\rm (A2)}). 
In addition, to prove the model selection consistency, \cite{BW:2017} required the spectral norm of the difference between the Hessian matrices corresponding to any two models to be bounded above by a function of the $\ell_2$-norm difference between the respective regression coefficients, and they assumed that the product of the response variables and the entries of design matrix are bounded by a constant, while these constraints are not imposed in our study. 
See assumptions B1, B2 and C1 in \cite{BW:2017} for details. 
Thirdly, no simulation studies were conducted in \cite{BW:2017}, leaving the empirical validity of the proposed method in question, while we include the computational strategy in the following section and examine the practical utility of the hyper-pMOM prior in the context of gene expression analysis.

\section{Posterior Computation} \label{sec:computation}

In this section, we describe how to approximate the marginal density of data and conduct the model selection procedure. The integral formulation in \eqref{marginal} cannot be calculated in a closed form. 
Hence, we use Laplace approximation to compute $m_{ k}( y)$ and $\pi( k \mid  y)$. 
Similar approaches to compute posterior probabilities have been used in \cite{johnson2012bayesian}, \cite{Shi:nonlocal} and \cite{Shin2018Sinica}. 

For any model $ k$, when $ U_{ k} =  I_{ k}$, the normalization constant $d_{ k}$ in \eqref{model:pmom} is given by 
$d_{ k} = \big\{(2r-1)!!\big\}^{-| k|}$. Let 
\begin{align*} 
	f( \beta_{ k} ) =& \log\Big(\exp \big\{L_n( {\beta_{k}}) \big\} \pi( {\beta_{k}} \mid  k)\Big) \\
	=& \sum_{i = 1}^n\Big\{y_i x_{i k}^\top {\beta_k} - \log\big(1 + \exp( x_{i k}^\top {\beta_k})\big)\Big\} - | k|\log\left((2r-1)!!\right) - \frac {| k|} 2\log(2\pi)  + \log\Big(\frac{\lambda_2^{\lambda_1}}{\Gamma(\lambda_1)}\Big) \\
	& - \Big(r| k| +\frac {| k|} 2 + \frac{1}{2} \Big)\log \Big( \lambda_2 + \frac{1}{2} \|\beta_k\|_2^2 \Big)  
	+ 2r \sum_{i=1}^{| k|}\log\big(|\beta_{k_i}|\big)  + \log \Gamma \Big( r |k| + \frac{|k|}{2} + \lambda_1 \Big)    .
\end{align*}
For any model $ k$, the Laplace approximation of $m_{ k}( y)$ is given by
\begin{align} \label{laplace}
	(2\pi)^{\frac {| k| } 2}\exp\big\{f(\what{ \beta}_{ k} )\big\}| V(\what{ \beta}_{ k} )|^{-\frac 1 2},
\end{align}
where $\what{ \beta}_{ k}  = \argmax_{ \beta_{ k}}f( \beta_{ k})$ is obtained via the optimization function \verb|optim| in \textsf{R} using a quasi-Newton method, and $ V( \beta_{ k} )$ is a $| k|\times | k| $ symmetric matrix defined as
\bea
V (\beta_k)  &=& - \sum_{i = 1}^n \frac{ x_{i k} x_{i k}^\top\exp( x_{i k}^\top {\beta_k})}{\big\{1 + \exp( x_{i k}^\top {\beta_k})\big\}^2} - {\rm diag}\bigg(\frac {2r} {\beta_{k_1}^2}, \ldots, \frac {2r} {\beta_{k_{| k|}}^2}\bigg) \\
&& - \Big( r|k| + \frac{|k|}{2} + \frac{1}{2} \Big) \bigg\{ \frac{1}{\lambda_2 + \|\beta_k\|_2^2/2 } I_{|k|}  - \frac{1}{(\lambda_2 + \|\beta_k\|_2^2/2)^2 } \beta_k \beta_k^\top \bigg\}   .
\eea
The above Laplace approximation can be used to compute the posterior probability ratio between two models.

The shotgun stochastic search (SSS) algorithm \citep{Hans2007,Shin2018Sinica} is inspired by MCMC but enables much more efficient identification of probable models by swiftly moving around in the model space as the dimension escalates. 
The SSS algorithm explores high-dimensional model spaces and quickly identifies ``interesting'' regions of high posterior probability over models. 
The SSS evaluates numerous models guided by the unnormalized posterior probabilities that can be approximated using the Laplace approximations of the marginal probabilities in (\ref{laplace}). 
Let $\mbox{nbd}( k) = \{\Gamma_k^+, \Gamma_k^-, \Gamma_k^0\}$ containing all the neighbors of model $ k$, in which $\Gamma_k^+ = \big\{ k \,\cup \{j\}: j \notin  k\big\}$, $\Gamma_k^- = \big\{ k\setminus\{j\}: j \in  k\big\}$ and $\Gamma_k^0 = \big\{ k\setminus\{j\} \cup \{l\}: j \in  k, l \notin  k\big\}$. 
Algorithm \ref{CHalgorithm} describes the SSS procedure.
\begin{algorithm}[tb]
	\caption{Shotgun Stochastic Search (SSS)}
	\label{CHalgorithm}
	\begin{algorithmic}
		\State Set an initial model $ k^{(1)}$
		\For{$i = 1$ to $i = N - 1$}
		\State (a) Compute $\pi( k\mid y)$ using \eqref{laplace} for all $ k \in \mbox{nbd}\big( k^{(i)}\big)$
		\State (b) Sample $ k^+$, $ k^-$ and $ k^0$ from $\Gamma_k^+$, $\Gamma_k^-$ and $\Gamma_k^0$ with probabilities proportional to $\pi( k \mid  y)$
		\State (c) Sample the next model $ k^{(i+1)}$ from $\{ k^{+},  k^{-},  k^{0}\}$ with probability proportional to 
		\State \quad\,  $\big\{\pi( k^+ \mid  y), \pi( k^- \mid  y), \pi( k^0 \mid  y)\big\}$
		\EndFor
	\end{algorithmic}
\end{algorithm}

However, as pointed out by \cite{Shin2018Sinica}, the SSS algorithm can be computationally expensive in high-dimensional settings.
The computational bottleneck is calculating the Laplace approximations of the marginal probabilities for the models in $\Gamma_k^+$ and $\Gamma_k^0$, whose cardinalities are $p-|k|$ and $(p-|k|)|k|$, respectively.
To alleviate computational burden, we slightly modify the SSS algorithm by reducing the number of entries in $\Gamma_k^+$ and $\Gamma_k^0$.
Specifically, we reduce the number of models in $\Gamma_k^+$ by selecting only (1) the top $K_1$ variables having large absolute sample correlation with $y$ and (2) $K_2$ randomly selected variables, and we define the resulting set as $\Gamma^+_{R,k}$.
Similarly, we define a reduced set $\Gamma^0_{R,k} = \{ k \setminus \{j\} \cup \{l\} : j \in k , l \in \Gamma^+_{R,k}  \}$ and replace $\Gamma^0_k$ with $\Gamma^0_{R,k}$ in the algorithm. 
By doing so, we can efficiently reduce the computational complexity of the algorithm.
Note that the cardinalities of $\Gamma^+_{R,k}$ and $\Gamma^0_{R,k}$ are $K_1+K_2$ and $(K_1+K_2) |k|$, respectively.
We call this modified algorithm the reduced SSS (RSSS) algorithm and describe it in Algorithm \ref{RCHalgorithm}.
Note that the RSSS algorithm is different from the simplified shotgun stochastic search with screening (S5) algorithm \citep{Shin2018Sinica}.
The two main differences are that the RSSS algorithm does not completely ignore the set $\Gamma^0_k$ and does not introduce temperature parameters. 
In the subsequent simulation study and real data analysis, the RSSS algorithm with $K_1=K_2=10$ is adopted for the posterior inference of the hyper-pMOM prior.
\begin{algorithm}[tb]
	\caption{Reduced Shotgun Stochastic Search (RSSS)}
	\label{RCHalgorithm}
	\begin{algorithmic}
		\State Set an initial model $ k^{(1)}$
		\For{$i = 1$ to $i = N - 1$}
		\State (a) Compute $\pi( k\mid y)$ using \eqref{laplace} for all $ k \in \mbox{nbd}_R\big( k^{(i)}\big)= \{\Gamma_{R,k}^+, \Gamma_k^-, \Gamma_{R,k}^0\}$
		\State (b) Sample $ k^+$, $ k^-$ and $ k^0$ from $\Gamma^+_{R,k}$, $\Gamma_k^-$ and $\Gamma^0_{R,k}$ with probabilities proportional to $\pi( k \mid  y)$
		\State (c) Sample the next model $ k^{(i+1)}$ from $\{ k^{+},  k^{-},  k^{0}\}$ with probability proportional to 
		\State \quad\,  $\big\{\pi( k^+ \mid  y), \pi( k^- \mid  y), \pi( k^0 \mid  y)\big\}$
		\EndFor
	\end{algorithmic}
\end{algorithm}

\begin{figure}
	\centering
	\includegraphics[width=0.53\linewidth]{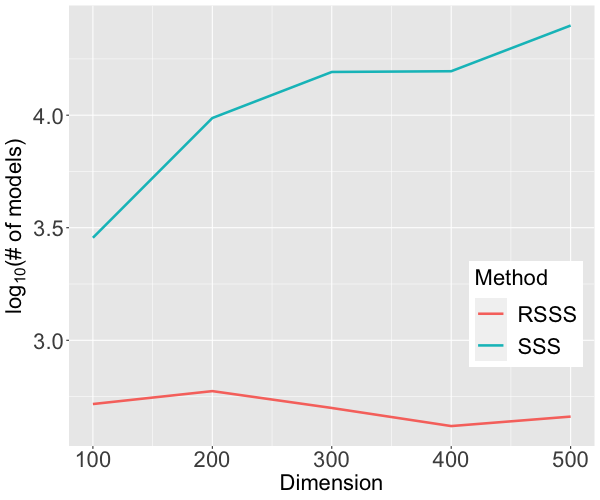}
	\caption{The average number of models searched before visiting the posterior mode.}
	\label{fig:RSSS_SSS}
\end{figure}

To demonstrate the computational efficiency of the RSSS algorithm and compare it with the SSS algorithm, we conduct a simulation study.
We generate the data from the model \eqref{logistic_model} with the true coefficient $\beta_0 = (1,1,1,0,\ldots, 0)^\top \in \bbR^p$ and design matrix $X= (x_1,\ldots, x_n)^\top \in \bbR^{n\times p}$, where $x_i \overset{i.i.d.}{\sim} N_p(0, I_p)$ for $i=1,\ldots, n$.
The number of samples is fixed at $n=100$, while the number of variables varies over $ p \in \{100,\ldots, 500\}$.
Figure \ref{fig:RSSS_SSS} shows the average number of models searched before visiting the posterior mode for each $p$, where the averages are calculated based on 10 repetitions.
When compared with the SSS algorithm, the RSSS algorithm investigates a far less number of models before hitting the posterior mode, while both algorithms found the same posterior mode for all data sets in our simulations.
Therefore, the RSSS algorithm can achieve nearly identical performance to the SSS algorithm while boosting computing efficiency.

\section{Simulation Studies}\label{sec:sim}

In this section, we investigate the performance of the hyper-pMOM prior for logistic regression models.
For given $n=100$ and $p \in \{100, 300\}$, 
simulated data sets are generated from \eqref{logistic_model} with the true coefficient vector $\beta_0$ and design matrix $X= (x_1,\ldots, x_n)^\top \in \bbR^{n\times p}$. 
We set the index set for nonzero values in $\beta_0$ at $t = \{1,2,3\}$, where nonzero coefficients $\beta_{0, t}$ are generated under the following two different settings:
\begin{itemize}
	\item Setting 1 (Weak signals): All the entries of $\beta_{0,t}$ are set to $1$.
	\vspace{-.5cm}
	
	\item Setting 2 (Moderate signals): All the entries of $\beta_{0,t}$ are set to $2$.
\end{itemize}
We generate covariate vectors as $x_i \overset{i.i.d.}{\sim} N_p(0, \sg)$ for $i=1,\ldots, n$, under the following cases of $\sg$:
\begin{itemize}
	\item Case 1 (Isotropic design): $\sg = I_p$ \vspace{-.5cm}
	
	\item Case 2 (Correlated design):  $\sg = (\sg_{ij})$, where $\sg_{ij} = 0.3^{|i-j|}$ for any $1\le i \le j \le p$.
\end{itemize}
We also generate test samples $\{(y_{{\rm test}, 1}, x_{{\rm test}, 1}),\ldots, (y_{{\rm test}, n_{\rm test}}, x_{{\rm test}, n_{\rm test}}) \}$ with $n_{\rm test}=50$ to evaluate the prediction performance.

In the various scenarios mentioned above, we compare the performance of our method with existing variable selection methods.
As Bayesian contenders, we consider the nonlocal pMOM prior \citep{caoEntropy2020}, 
spike and slab prior \citep{SpikeSlab} and
empirical Bayesian Lasso (EBLasso) \citep{EBLasso}, 
while we consider Lasso \citep{Lasso} and and SCAD \citep{SCAD} as frequentist competitors.

The \textsf{R} codes for implementing the hyper-pMOM prior are publicly available at \url{https://github.com/leekjstat/Hierarchical-nonlocal}. 
The hyperparameters in \eqref{model:pmom} and \eqref{model:tau} are set at $U = I_p$, $r=1$, $\lambda_1=1$ and $\lambda_2 = 10^2 n^{-1/3}p^{(2+ 0.001)2/3}$, which satisfy Condition \hyperref[cond_A4]{\rm (A4)}.
For the implementation of the pMOM prior, the \textsf{R} codes available at \url{https://github.com/xuan-cao/Nonlocal-Logistic-Selection} are used, where the hyperparameters are set at $U = I_p$, $r=1$ and $\tau = n^{-1/2} p^{2+0.05}$.
For both RSSS and SSS procedures, initial models are set by randomly taking three nonzero entries.
For the regularization approaches, the tuning parameters are chosen by 5-fold cross-validation.

To examine the performance of each method, the values of the precision, sensitivity, specificity, Matthews correlation coefficient (MCC) \citep{matthews1975comparison} and mean squared prediction error (MSPE) are used.
These criteria are defined as
\bea
&&\text{Precision}  =    \frac{TP}{TP+FP} , \quad
\text{Sensitivitiy}  =     \frac{TP}{TP+FN} ,  \quad
\text{Specificity}  =   \frac{TN}{TN+FP}   ,  \\
&&\text{MCC} =   \frac{TP \times TN - FP\times FN}{\sqrt{(TP+FP)(TP+FN)(TN+FP)(TN+FN)}}, \quad	  
\text{MSPE}  =  \frac{1}{n_{\rm test}} \sum_{i=1}^{n_{\rm test}}     \big( \hat{y}_i - y_{{\rm test},i}  \big)^2,
\eea
where \emph{TP}, \emph{TN}, \emph{FP} and \emph{FN} are true positive, true negative, false positive and false negative, respectively.
Here, $\hat{y}_{i} = \exp  (x_{{\rm test}, i}^\top \hat{\beta})/ \{ 1+ \exp  (x_{{\rm test}, i}^\top \hat{\beta}) \}$, where $\hat{\beta}$ is the estimated coefficient vector.
For the hyper-pMOM  and pMOM priors, the nonzero part in $\hat{\beta}$ is chosen as the posterior mode with the estimated model $\hat{k}$, i.e., $\hat{\beta}_{\hat{k}} = \argmax_{\hat{k}} f( \beta_{\hat{k}} )$.
For the spike and slab prior, the posterior mean based on 2,000 posterior samples is used as $\hat{\beta}$.
The averages of each criterion based on 10 repetitions are summarized in Tables \ref{table:comp1}--\ref{table:comp4}.

\begin{table}[!tb]
	\centering
	\caption{
		The summary statistics for Case 1 (isotropic design) when $p=100$.
	}
	\begin{tabular}{c c c c c c c}
		\toprule
		& &\textbf{Precision} & \textbf{Sensitivity} & \textbf{Specificity} & \textbf{MCC} & \textbf{MSPE}  \\ \midrule
		\multirow{6}{*}{Setting 1} &  Hyper-pMOM & \bf 1.000  &0.667 & \bf 1.000 & \bf 0.812 &0.210  \\ 
		& pMOM & \bf 1.000 &0.667 & \bf 1.000 & \bf 0.812& \bf 0.206  \\ 
		& Spike and slab  & 0.917 &0.700 &0.997 &0.744 &0.227  \\ 
		& EBLasso &0.500 &0.733 & 0.979&0.618 &0.256  \\ 
		& Lasso  & 0.364 &0.900 &0.889 &0.438 &0.210  \\ 
		& SCAD & 0.389 & \bf 0.900 &0.915 &0.470 &0.208  \\ \midrule
		\multirow{6}{*}{Setting 2} & Hyper-pMOM  & \bf 1.000 &0.967 &\bf 1.000 &\bf 0.981 &0.121  \\ 
		& pMOM  &0.933 &\bf 1.000 &0.994 &0.956 &0.132  \\ 
		& Spike and slab  &0.832 &\bf 1.000 &0.990 &0.899 &0.124  \\ 
		& EBLasso &0.920  &\bf 1.000 &0.996 &0.953 &0.227  \\ 
		& Lasso  &0.241 &\bf 1.000 &0.877 &0.453 &\bf 0.147  \\ 
		& SCAD &0.337 &\bf 1.000 &0.929 &0.533 &0.125  \\  \bottomrule
	\end{tabular}\label{table:comp1}
\end{table}
\begin{table}[!tb]
	\centering
	\caption{
		The summary statistics for Case 2 (correlated design) when $p=100$.
	}
	\begin{tabular}{c c c c c c c}
		\toprule
		& &\textbf{Precision} & \textbf{Sensitivity} & \textbf{Specificity} & \textbf{MCC} & \textbf{MSPE}  \\ \midrule
		\multirow{6}{*}{Setting 1} & Hyper-pMOM  &\bf 1.000  &0.667 &\bf 1.000 & 0.812 &0.177  \\ 
		&pMOM &\bf 1.000 &0.667 &\bf 1.000 & 0.812&0.177  \\ 
		&Spike and slab  & 0.927 &0.867 &0.997 &\bf 0.881 &0.182  \\ 
		&EBLasso &0.753 &0.800 & 0.986&0.749 &0.235  \\ 
		&Lasso  & 0.392 &\bf 0.967 &0.920 &0.566 &0.172  \\ 
		&SCAD & 0.452 & \bf 0.967 &0.938 &0.619 &\bf 0.167  \\ \midrule
		\multirow{6}{*}{Setting 2} &Hyper-pMOM &\bf 0.975 &0.800 &\bf 0.999 &0.874 &0.134  \\ 
		&pMOM &0.950 &0.833 &0.998 &0.878 &0.135  \\ 
		&Spike and slab  &0.835 & 0.967 &0.992 &0.886 &0.116  \\ 
		&EBLasso &0.900  & 0.967 &0.996 &\bf 0.926 &0.208  \\ 
		&Lasso  &0.220 &\bf 1.000 &0.870 &0.433 &0.125  \\ 
		&SCAD &0.307 &\bf 1.000 &0.921 &0.528 &\bf 0.113  \\  \bottomrule
	\end{tabular}\label{table:comp2}
\end{table}
\begin{table}[tb]
	\centering
	\caption{
		The summary statistics for Case 1 (isotropic design) when $p=300$.
	}
	\begin{tabular}{c c c c c c c}
		\toprule
		& &\textbf{Precision} & \textbf{Sensitivity} & \textbf{Specificity} & \textbf{MCC} & \textbf{MSPE}  \\ \midrule
		\multirow{6}{*}{Setting 1} &Hyper-pMOM & 0.900   &    0.600  &     \bf 0.999 & \bf 0.733 &  0.227  \\ 
		&pMOM & 0.850    &   0.567    &   \bf 0.999 &0.692& 0.232  \\ 
		&Spike and slab  & \bf 0.925   &    0.567  &    \bf  0.999& 0.661 &0.238 \\ 
		&EBLasso & 0.660&       0.533 &      0.996& 0.563& 0.252  \\ 
		&Lasso  & 0.191  &     \bf 0.967    &   0.949 &0.410& \bf 0.215  \\ 
		&SCAD & 0.199      & \bf 0.967     &  0.953 &0.420& 0.219  \\ \midrule
		\multirow{6}{*}{Setting 2} &Hyper-pMOM& \bf 1.000    &   0.800  &   \bf  1.000 &0.889 &0.164  \\ 
		&pMOM & 0.943      & 0.800   &    0.999 &0.854 &0.166  \\ 
		&Spike and slab  & 0.875 &      0.967      & 0.998& \bf 0.917& 0.150  \\ 
		&EBLasso & 0.925 &      0.900 &      0.999& 0.901& 0.224  \\ 
		&Lasso  & 0.184   &  \bf  1.000  &     0.941 &0.406 &0.153  \\ 
		&SCAD & 0.198     &\bf  1.000    &   0.956 &0.433& \bf 0.139  \\  \bottomrule
	\end{tabular}\label{table:comp3}
\end{table}
\begin{table}[tb]
	\centering
	\caption{
		The summary statistics for Case 2 (correlated design) when $p=300$.
	}
	\begin{tabular}{c c c c c c c}
		\toprule
		& &\textbf{Precision} & \textbf{Sensitivity} & \textbf{Specificity} & \textbf{MCC} & \textbf{MSPE}  \\ \midrule
		\multirow{6}{*}{Setting 1} &Hyper-pMOM &\bf 1.000  &0.667 &\bf 1.000 & 0.812 &0.177  \\ 
		&pMOM &\bf 1.000 &0.667 &\bf 1.000 & 0.812&0.177  \\ 
		&Spike and slab  & 0.927 &0.867 &0.997 &\bf 0.881 &0.182  \\ 
		&EBLasso &0.753 &0.800 & 0.986&0.749 &0.235  \\ 
		&Lasso  & 0.392 &\bf 0.967 &0.920 &0.566 &0.172  \\ 
		&SCAD & 0.452 & \bf 0.967 &0.938 &0.619 &\bf 0.167  \\ \midrule
		\multirow{6}{*}{Setting 2} &Hyper-pMOM &\bf 1.000  &     0.667 &   \bf   1.000& \bf 0.815 &0.191 \\ 
		&pMOM & 0.950     &  0.633  &    \bf  1.000 &0.774 & \bf 0.184  \\ 
		&Spike and slab  & 0.908&       0.700     &  0.999& 0.779 &0.203  \\ 
		&EBLasso & 0.708   &    0.767 &      0.995 &0.703& 0.240  \\ 
		&Lasso  & 0.366     &  \bf 0.967   &    0.969& 0.555& 0.187  \\ 
		&SCAD & 0.331      & \bf 0.967  &     0.973 &0.539 &0.187  \\  \bottomrule
	\end{tabular}\label{table:comp4}
\end{table}

Based on the simulation results, the proposed hyper-pMOM prior tends to achieve high precision and specificity in most settings, which means that the hyper-pMOM prior produces low false positives.
The hyper-pMOM prior also tends to have high MCC especially under the isotropic design (Case 1), and even under the correlated design (Case 2), overall higher MCC than the EBLasso and frequentist methods.
Furthermore, the hyper-pMOM prior outperforms the pMOM prior in the majority of situations, demonstrating its relative superiority to the pMOM prior.
Overall, the Bayesian methods achieve high precision, specificity and MCC, while the frequentist methods have high sensitivity and MSPE.
Similar observations have also been discussed in the literature \citep{meinshausen2006,caoEntropy2020}.

\section{Application to the Analysis of Differentially Expressed Genes}\label{sec:real}

Asthma has been recognized as a systemic disease consisting of networks of genes showing inflammatory changes involving a broad spectrum of adaptive and innate immune systems. Utilizing measurable characteristics of asthmatic patients, including biologic gene expression markers, can help to identify phenotypic categories in asthma. Identification of these phenotypes may help develop strategies for preventing progression of disease severity \citep{Carr2016}. 
We aim to apply the proposed variable selection method to develop an RNA-seq-based risk score for asthma stratification.

To construct the risk score, gene expression analysis is performed using an asthma RNA-seq dataset GSE146046 in the Gene Expression Omnibus (GEO) database \citep{GSE}. There are 95 individuals in the GSE146046 dataset including 51 asthmatic subjects and 44 non-asthmatic subjects. The gene expression levels of all the 95 individuals are first randomly split into 2/3 as training and 1/3 as test data while maintaining the same ratio between asthma and control groups.  
Next we conduct the analysis of differentially expressed genes (DEG) based on the training set and construct data tables containing raw count values for approximately 20,000 unique genes, with genes in rows and sample GEO accession numbers in columns. 
\verb|DESeq2| \textsf{R} package is used to store the read counts and the intermediate estimated quantities during statistical analysis \citep{Love2014}. We extract summary statistics including \textit p-values for all genes and retain a total of 180 DEGs with \textit p-values less than $10^{-4}$ visualized in a Manhattan plot (Figure \ref{fig1}). 
The proposed method and other contenders are applied to the resulting dataset with $p = 180$. The hyperparameters for all the methods are set as in the simulation studies.

\begin{figure}[tb]
	\centering
	\includegraphics[width=0.75\linewidth]{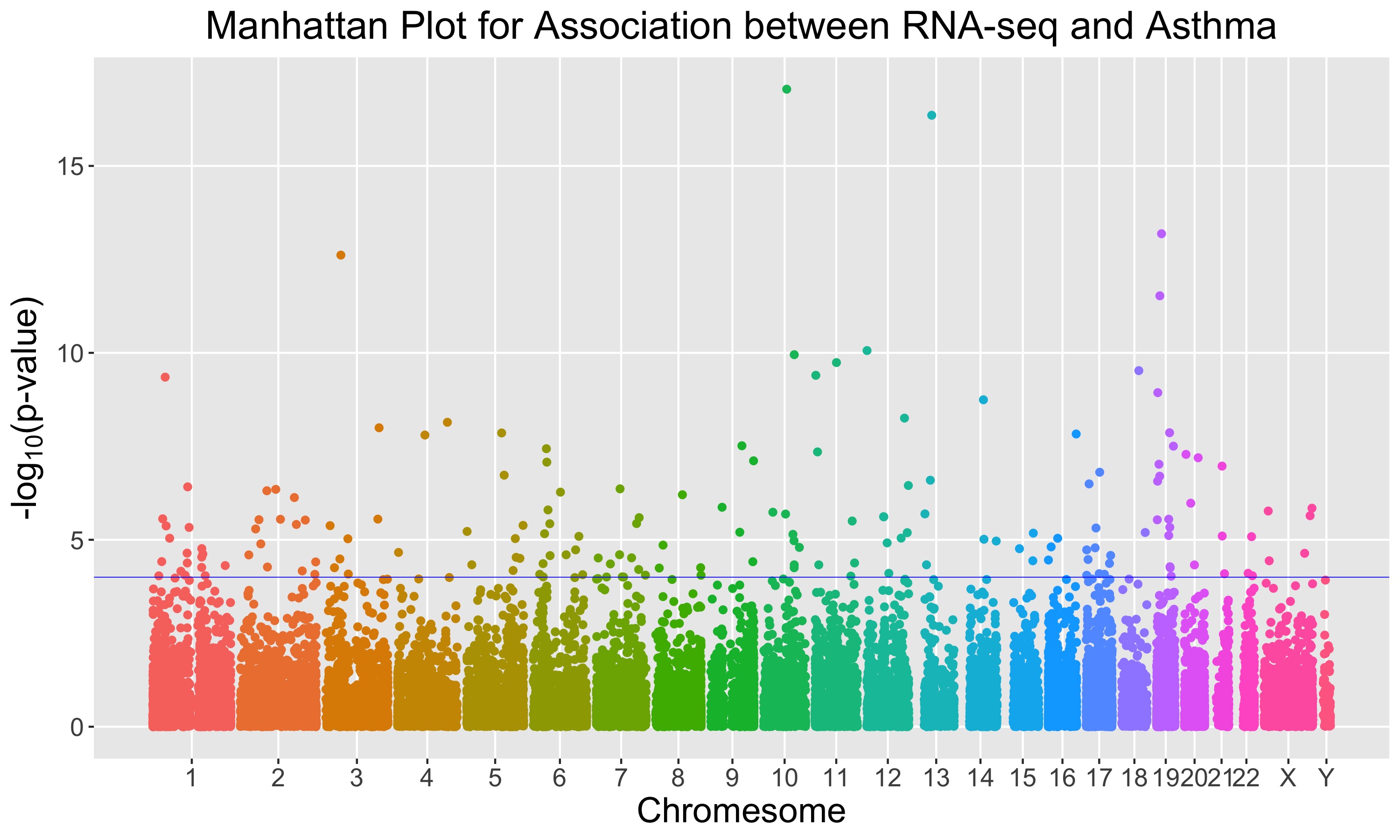}
	\caption{Manhattan plot for association between RNA-seq and asthma. The points beyond the blue line represents the 180 selected genes with $p < 10^{-4}$ based on the DEG analysis.}
	\label{fig1}
\end{figure}

In Figure \ref{fig2}, we draw the receiver operating characteristic (ROC) curves for all the methods. The results are further summarized in Table \ref{table:comp5} where a common cutoff value 0.5 is adopted for thresholding prediction. From Table \ref{table:comp5} and Figure \ref{fig2}, we can tell that the hyper-pMOM prior has overall better prediction performance compared with other methods. Of the 180 genes, four genes, namely, TRIM26, CAPSL, FOXA3 and PYY, are selected by the proposed method. These identified genes seem plausible and have been established in the asthma GWAS catalog \citep{SCHOETTLER20191495}, which may help better understand the omics architecture that drives complex diseases.

\begin{figure}[tb]
	\centering
	\includegraphics[width=0.53\linewidth]{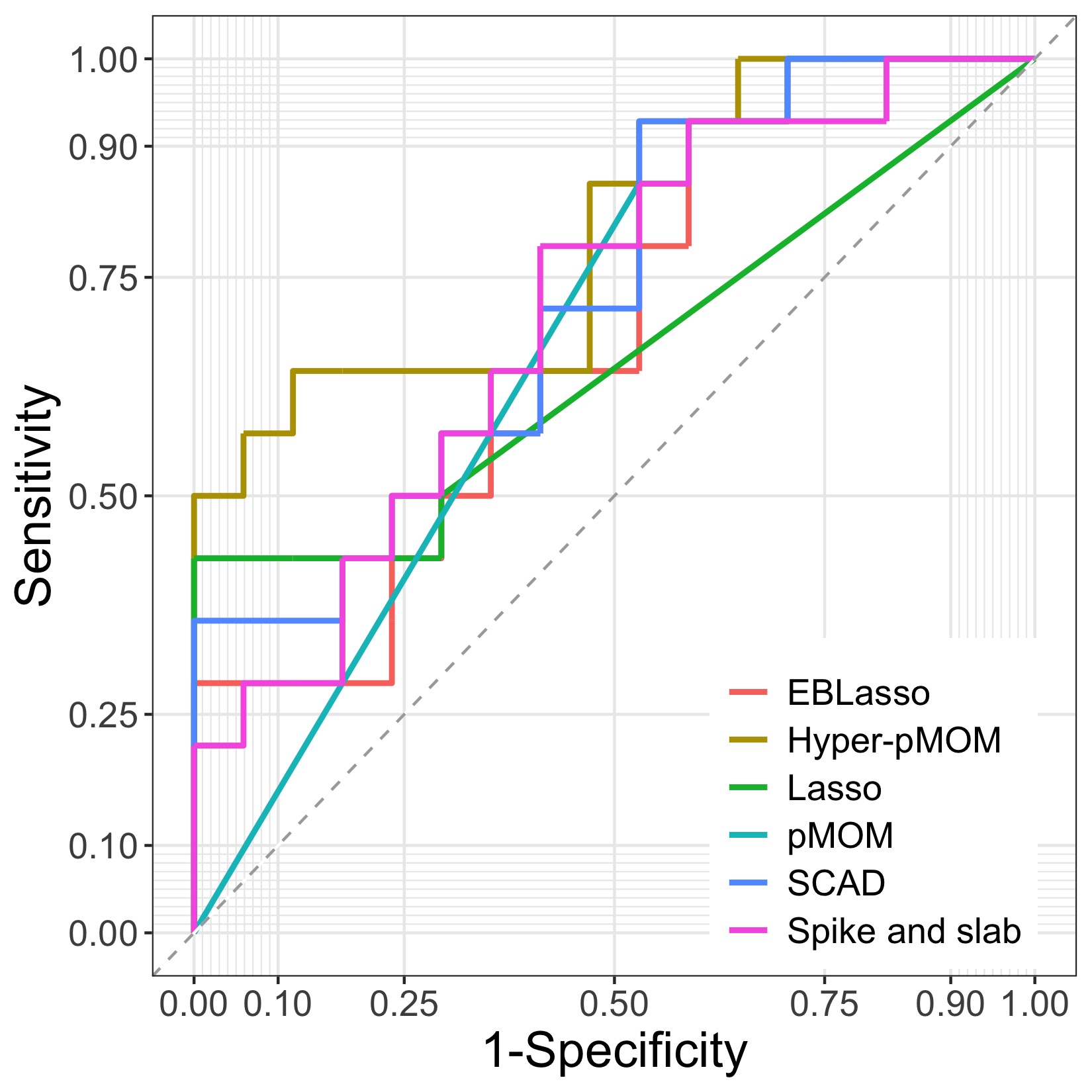}
	\caption{ROC curves comparison between different methods. X-axis: false positive rate (1 - specificity). Y-axis: true positive rate (sensitivity).}
	\label{fig2}
\end{figure}
\begin{table}[tb]
	\centering
	\caption{
		The summary statistics for prediction performance in the testing set.
	}
	\begin{tabular}{c c c c c c c}
		\toprule
		&	&\textbf{Precision} & \textbf{Sensitivity} & \textbf{Specificity} & \textbf{MCC} & \textbf{MSPE}  \\ \midrule
		&Hyper-pMOM &\bf 0.750 &0.643 &\bf 0.824 &\bf 0.477&0.225  \\ 
		&pMOM &0.565 &\bf 0.929 &0.412 &0.387&0.355  \\ 
		&Spike and slab  &0.571 &0.857&0.471 &0.349&0.233 \\ 
		&EBLasso &0.583 &0.500 &0.706 &0.210&0.238  \\ 
		&Lasso  &0.545 &0.857 &0.412&0.295 &0.233 \\ 
		&SCAD &0.565  &0.929&0.412&0.387 &\bf 0.224  \\  \bottomrule
	\end{tabular}\label{table:comp5}
\end{table}

\section{Discussion}\label{sec:disc}
In this paper, we consider the hyper-pMOM prior and investigate asymptotic properties of the resulting posterior distribution.
Although the hyper-pMOM prior possesses thicker tails than the pMOM prior by adopting the hyperprior \eqref{model:tau} rather than a fixed $\tau$, it still has the hyperparameters, $\lambda_1$ and $\lambda_2$.
Because the choice of hyperparameters could affect variable selection performance, a cross-validation-based selection approach for $\lambda_1$ and $\lambda_2$ will be worth exploring.
In this case, studying the theoretical properties of the posterior based on the hyperparameters chosen by the cross-validation procedure will be a challenging but important task.

Furthermore, as mentioned in Section \ref{sec:selection cons}, deriving strong model selection consistency in a broader class of GLMs is an interesting future research direction.
Note that, in this work, we focus on logistic regression models when proving strong model selection consistency of the posterior.
An extension to general GLMs might require more conditions on the design matrix based on the current techniques used in the proof, due to more complicated structure of the Hessian matrix for other GLMs compared with that for the logistic regression model.

\newpage

\section*{Supplementary Material}

Throughout the Supplementary Material, we assume that for any  
$$ u \in \{ u \in \bbR^n :  u \text{ is in the space spanned by the columns of }  \sg^{1/2}  X_{ k} \}$$ 
and any model $ k \in \{ k \subseteq [r] : | k| \le m_n + | t| \}$, there exists $\delta^*>0$ such that
\bean\label{delta_star}
\bbE \Big[ \exp \big\{ u^\top  \sg^{-1/2}( y -  \mu) \big\} \Big] &\le& \exp\Big\{\frac{(1+\delta^*) u^\top  u}{2} \Big\} ,
\eean
for any $n \ge N(\delta^*)$.
However, as stated in \cite{Naveen:2018}, there always exists $\delta^*>0$ satisfying inequality \eqref{delta_star}, so it is not really a restriction.
Since we will focus on sufficiently large $n$, $\delta^*$ can be considered an arbitrarily small constant, so  we can always assume  that $\delta > \delta^*$.

\begin{proof}[Proof of Theorem \ref{thm:nosuper}]
	Let $M_1 = \{ k:  k \supsetneq  t, | k| \le m_n \}$ and
	\bea
	PR( k,  t) &=& \frac{\pi( k \mid  y)}{\pi( t \mid  y)},
	\eea
	where $t \subseteq [r]$ is the true model.
	We will show that 
	\bean\label{sel_goal1}
	\sum_{ k:  k \in M_1} PR( k,  t) &\overset{P}{\lra}& 0 \quad \text{ as } n\to\infty.
	\eean
	
	By Taylor's expansion of $L_ n( {\beta_{k}})$ around $\what{ \beta}_{ k}$, which is the MLE of $ {\beta_{k}}$ under the model $ k$, we have
	\bea
	L_n( {\beta_{k}}) &=& L_n (\what{ \beta}_{ k}) - \frac{1}{2}( {\beta_{k}} - \what{ \beta}_{ k} )^\top  H_n(\tilde{ \beta}_{ k}) ( {\beta_{k}} - \what{ \beta}_{ k} ) 
	\eea
	for some $\tilde{ \beta}_{ k}$ such that $\|\tilde{ \beta}_{ k} - \what{ \beta}_{ k}\|_2 \le \| {\beta_{k}} - \what{ \beta}_{ k}\|_2$.
	Furthermore, by Lemmas A.1 and A.3 in \cite{Naveen:2018} and Condition \hyperref[cond_A2]{\rm (A2)}, with probability tending to 1,
	\bea
	- \frac{1+\epsilon}{2} ( {\beta_{k}} - \what{ \beta}_{ k} )^\top  H_n( \beta_{0, k}) ( {\beta_{k}} - \what{ \beta}_{ k} ) \le	L_n( {\beta_{k}}) - L_n (\what{ \beta}_{ k}) 
	\le - \frac{1-\epsilon}{2} ( {\beta_{k}} - \what{ \beta}_{ k} )^\top  H_n( \beta_{0, k}) ( {\beta_{k}} - \what{ \beta}_{ k} ) 
	\eea
	for any $ k \in M_1$ and $ {\beta_{k}}$ such that $\| {\beta_{k}} -  \beta_{0,  k}\|_2 < c \sqrt{ | k| \Lambda_{| k|} \log p /n } = : c w_n$, where $\epsilon = \epsilon_n := c'\sqrt{m_n^2 \Lambda_{m_n} \log p /n } = o(1)$, for some constants $c,c'>0$.
	Note that for $ {\beta_{k}}$ such that $\| {\beta_{k}} - \what{ \beta}_{ k}\|_2 = cw_n /2$, 
	\bea
	L_n( {\beta_{k}}) - L_n (\what{ \beta}_{ k})  
	&\le& - \frac{1-\epsilon}{2} \, \| {\beta_{k}} - \what{ \beta}_{ k}\|_2^2 \, \lambda_{\min} \big\{  H_n( \beta_{0, k})  \big\} \\
	&\le& - \frac{1-\epsilon}{2} \, \frac{c^2 w_n^2}{4} n \lambda 
	= - \frac{1-\epsilon}{8} c^2 \lambda | k| \Lambda_{| k|} \log p \,\, \lra \,\, - \infty \quad \text{ as }n\to\infty,
	\eea
	where the second inequality holds due to Condition \hyperref[cond_A2]{\rm (A2)}.
	It also holds for any $ {\beta_{k}}$ such that $\| {\beta_{k}} - \what{ \beta}_{ k}\|_2 > cw_n /2$ by concavity of $L_n(\cdot)$ and the fact that $\what{ \beta}_{ k}$ maximizes $L_n( {\beta_{k}})$.
	
	Define the set $B := \big\{  {\beta_{k}}:  \| {\beta_{k}} - \what{ \beta}_{ k}\|_2 \le cw_n /2  \big\},$
	then we have $B \subset \{ {\beta_{k}}: \| {\beta_{k}}-  \beta_{0, k}\|_2 \le cw_n  \}$ for some large $c>0$ and any $ k \in M_1$, with probability tending to 1.	
	\bean \label{marginal density}
		m_{ k}( y) &=&\int \int \exp \big\{L_n( {\beta_{k}}) \big\} \pi\left( {\beta_{k}} \mid  k\right)\,d {\beta_{k}} d\tau \nonumber\\
		&=& \int \int \exp \big\{L_n( {\beta_{k}}) \big\}  d_{ k}(2\pi)^{-| k|/2}\tau^{-r| k| - | k|/2} | U_{ k}| ^{\frac 1 2} \exp \Big(- \frac{ {\beta_{k}} ^\top  U_{ k}  {\beta_{k}}}{2 \tau}\Big)\prod_{i =1}^{| k|} \beta_{k_i}^{2r} \nonumber\\
		&&\qquad \times \,\,\frac{\lambda_2^{\lambda_1}}{\Gamma(\lambda_1)}\tau^{-\lambda_1 - 1} \exp\Big(-\frac{\lambda_2}\tau\Big)\, d {\beta_{k}} d\tau  \nonumber\\
		&\le&  \frac{\lambda_2^{\lambda_1}}{\Gamma(\lambda_1)}d_{ k}(2\pi)^{-| k|/2} | U_{ k}| ^{\frac 1 2} \exp \big\{ L_n(\what{ \beta}_{ k}) \big\}\nonumber\\
		&&\times\,\, \Bigg\{ \int \int_B M_1 \,d {\beta_{k}} d\tau  +\exp \big( -\frac{1-\epsilon}{8} c^2 \lambda | k| \Lambda_{| k|} \log p  \big) \int \int_{B^c} M_2\, d {\beta_{k}} d\tau \Bigg\},
	\eean 
	where 
	\bea
	M_1 = \exp\Big(-\frac{\lambda_2}\tau\Big) \exp \bigg\{ -\frac{1-\epsilon}{2} ( {\beta_{k}} - \what{ \beta}_{ k} )^\top  H_n( \beta_{0, k}) ( {\beta_{k}} - \what{ \beta}_{ k} ) - \frac{ {\beta_{k}} ^\top  U_{ k}  {\beta_{k}}}{2 \tau}  \bigg\} \prod_{i =1}^{| k|} \beta_{k_i}^{2r} 
	\eea
	and 
	\bea
	M_2 = \tau^{-r| k| - | k|/2 - \lambda_1 - 1} \exp\Big(-\frac{\lambda_2}\tau\Big) \exp \bigg(- \frac{ {\beta_{k}} ^\top  U_{ k}  {\beta_{k}}}{2 \tau} \bigg) \prod_{i =1}^{| k|} \beta_{k_i}^{2r}.
	\eea
	Note that for $ A_{ k} = (1-\epsilon) H_n( \beta_{0, k})$ and $ {\beta_{k}}^* = ( A_{ k} +  U_{ k}/\tau)^{-1}  A_{ k} \what{ \beta}_{ k}$, we have
	\bea
	&& \int \int_B M_1\, d {\beta_{k}} d\tau \\
	&\le& \int \int \tau^{-r| k| - | k|/2 - \lambda_1 - 1}  \exp\Big(-\frac{\lambda_2}\tau\Big)\exp \bigg\{ -\frac{1}{2} ( {\beta_{k}} - { \beta}^*_{ k})^\top ( A_{ k} +  U_{ k}/\tau) ( {\beta_{k}} - { \beta}^*_{ k} ) \bigg\} \prod_{i =1}^{| k|} \beta_{k_i}^{2r}  \\
	&&\qquad \times \,\, \exp \bigg\{ -\frac{1}{2} \what{ \beta}_{ k}^\top \big( A_{ k} -  A_{ k}( A_{ k} +  U_{ k}/\tau)^{-1}  A_{ k} \big)\what{ \beta}_{ k}   \bigg\}  \,d {\beta_{k}} d\tau\\
	&=&  \int(2\pi)^{| k|/2}  \det \big(A_{k} + U_{k}/\tau\big)^{-1/2}  \tau^{-r| k| - | k|/2 - \lambda_1 - 1}  \exp\Big(-\frac{\lambda_2}\tau\Big) E_{ k}\big(\prod_{i=1}^{| k|}\beta_{k_i}^{2r}\big)  \\
	&&\quad \times \,\, \exp\Big\{-\frac{1}{2} \what{ \beta}_{ k}^\top \big( A_{ k} -  A_{ k}( A_{ k} +  U_{ k}/\tau)^{-1}  A_{ k} \big)\what{ \beta}_{ k}   \Big\} \, d\tau,
	\eea
	where $E_{ k}(.)$ denotes the expectation with respect to a multivariate normal distribution with mean $ {\beta_{k}}^*$ and 
	covariance matrix $ A_{ k}+  U_{ k}/\tau$. It follows from Lemma 6 in the supplementary material for \cite{johnson2012bayesian} that
	\bea
	E_{ k}\big(\prod_{i=1}^{| k|}\beta_{k_i}^{2r}\big) 
	&\le& \left(\frac{n\Lambda_{| k|} + \tau^{-1}a_2 }{n\lambda +  \tau^{-1}a_1} \right)^{| k|/2}\left\{\frac{4V}{| k|} + \frac{4\left[(2r-1)!!\right]^{\frac 1 r}}{n(\lambda +  \tau^{-2})}\right\}^{r| k|}\\
	&\le& \left(\frac{n\Lambda_{| k|}}{n\lambda +  \tau^{-1}a_1} \right)^{| k|/2}2^{r| k| - 1}\Bigg\{\left(\frac{4V}{| k|}\right)^{r| k|} + \left(\frac{4\left[(2r-1)!!\right]^{\frac 1 r}}{n(\lambda +  \tau^{-1}a_1)}\right)^{r| k|}\Bigg\} \\
	&\le& \Big(\frac{\Lambda_k}\lambda\Big)^{|k|/2} \exp\Big(\frac{a_2k}{2n\Lambda_k\tau}\Big)\,2^{r| k| - 1}\Bigg\{\left(\frac{4V}{| k|}\right)^{r| k|} + \left(\frac{4\left[(2r-1)!!\right]^{\frac 1 r}}{n\lambda}\right)^{r| k|}\Bigg\},
	\eea
	where $V = \| {\beta_{k}}^*\|_2^2$ and the last inequalify follows from $1 + x \le \exp(x)$. Next, note that it follows from Lemma A.3 in the supplemental material for \citep{Naveen:2018} that 
	\bea
	V = \| {\beta_{k}}^*\|_2^2 \le \|\what{ \beta}_{ k}\|_2^2 &\le& \big(\|\what{ \beta}_{ k} -  \beta_{0, k}\|_2 + \| \beta_{0, k}\|_2\big)^2 \\
	&\le& \bigg(\sqrt{\frac{| k|\Lambda_{| k|}\log p}{n}} + \sqrt{\log p}\bigg)^2\\
	&\le& 2\bigg(\frac{| k|\Lambda_{| k|}\log p}{n} + \log p\bigg).
	\eea	
	Therefore,
	\bea
	\left(\frac V {| k|}\right)^{r| k|} \le \bigg(\frac{2\log p}{| k|} + o(1)\bigg)^{r| k|}.
	\eea
	Therefore, by noting that $\exp\big\{-1/2 \what{ \beta}_{ k}^\top \big( A_{ k} -  A_{ k}( A_{ k} +  U_{ k}/\tau)^{-1}  A_{ k} \big)\what{ \beta}_{ k}   \big\} \ge 1$, it follows from \eqref{marginal density} that, for some constant $C_1>0$,
	\bean \label{integral_B}
	\int \int_B M_1\, d {\beta_{k}} d\tau 
	&\le& (C_1\Lambda_k)^{| k|/2}  \Big(\frac{\log p}{| k|}\Big)^{r| k|} \nonumber\\
	&&\times \int \det \big(A_{k} + U_{k}/\tau\big)^{-1/2} \tau^{-r| k| - | k|/2 - \lambda_1 - 1}  \exp\bigg\{-\frac{\lambda_2 - a_2|k|/(2n\Lambda_k)}\tau\bigg\} \,d\tau \nonumber\\
	&\le& \Big(\frac{C_1\Lambda_k}{n}\Big)^{| k|/2}  \Big(\frac{\log p}{| k|}\Big)^{r| k|} \det(A_k)^{-1/2} \nonumber\\
	&&\times \int  \tau^{-r| k| - | k|/2 - \lambda_1 - 1}  \exp\bigg\{-\frac{\lambda_2 - a_2|k|/(2n\Lambda_k)}\tau\bigg\} \,d\tau \nonumber\\
	&\le& \Big(\frac{C_1\Lambda_k}{n}\Big)^{| k|/2}  \Big(\frac{\log p}{| k|}\Big)^{r| k|} \det(n^{-1}A_k)^{-1/2} \nonumber\\
	&&\times \big\{\lambda_2 - a_2|k|/(2n\Lambda_k)\big\}^{-(r|k| + |k|/2 + \lambda_1)} \Gamma\big(r|k| + |k|/2 + \lambda_1 \big).
	\eean
	Next, note that 
	\bea
	\int \int_{B^c} M_2 \, d {\beta_{k}} d\tau 
	&\le& \int \int \tau^{-r| k| - | k|/2 - \lambda_1 - 1} \exp\Big(-\frac{\lambda_2}\tau\Big) \exp \bigg(- \frac{ {\beta_{k}} ^\top  U_{ k}  {\beta_{k}}}{2 \tau} \bigg) \prod_{i =1}^{| k|} \beta_{k_i}^{2r} \, d {\beta_{k}} d\tau  \\
	&\le& \int \tau^{-r| k| - | k|/2 - \lambda_1 - 1} \exp\Big(-\frac{\lambda_2}\tau\Big) \big\{2\pi (2r-1)^2 \tau/a_1\big\}^{|k|/2} (\tau/a_1)^{r|k|} \, d\tau \\
	&\le& (C_2)^{|k|} \lambda_2^{-\lambda_1} \Gamma(\lambda_1).
	\eea
	Combining with \eqref{marginal density} and using the Stirling approximation for the gamma function, we obtain the following upper bound for $m_{k}(y_n)$,
	\bean \label{marginal_upper_supset}
		m_{ k}( y) &\le&  C_3^{|k|}  \exp \big\{ L_n(\what{ \beta}_{ k}) \big\} n^{-|k|/2} \det(n^{-1}A_k)^{-1/2} \lambda_2^{-(r|k| + |k|/2)}  \Lambda_k^{|k|/2} \big(|k|^{1/r} \log p \big)^{r|k|} \nonumber\\
		&&+ \,\, C_3^{|k|}  \exp \big\{ L_n(\what{ \beta}_{ k}) \big\} \exp \big( -\frac{1-\epsilon}{8} c^2 \lambda | k| \Lambda_{| k|} \log p  \big) \nonumber\\
		&\lesssim& C_3^{|k|}  \exp \big\{ L_n(\what{ \beta}_{ k}) \big\} n^{-|k|/2} \det(n^{-1}A_k)^{-1/2} \lambda_2^{-(r|k| + |k|/2)}  \Lambda_k^{|k|/2} \big(|k|^{1/r} \log p \big)^{r|k|},
	\eean
	for any $k \in M_1$ and some constant $C_3 > 0$, by noting that 	
	\bea
	\det (A_k)^{1/2} \lambda_2^{r|k| + |k|/2}  \Lambda_k^{-|k|/2} \big(|k|^{1/r} \log p \big)^{-r|k|} 
	&\ll&\exp \big\{ \frac{1-\epsilon}{8} c^2 \lambda | k| \Lambda_{| k|} \log p  \big\} 
	\eea
	by Conditions \hyperref[cond_A1]{\rm (A1)}, \hyperref[cond_A2]{\rm (A2)} and \hyperref[cond_A4]{\rm (A4)}. 
	
	Similarly, by Lemma 4 in the supplemental material for \cite{johnson2012bayesian} and the similar arguments leading up to \eqref{integral_B}, with probability tending to 1, we have, for some constant $C_4 > 0$, the marginal conditional density $m_t(y)$ will be bounded below by, 
	\bea
	m_{ t}( y) &=& \int \int \exp \big\{L_n( {\beta_{t}}) \big\}  d_{ t}(2\pi)^{-| t|/2}\tau^{-r| t| - | t|/2} | U_{ t}| ^{\frac 1 2} \exp \Big(- \frac{ {\beta_{t}} ^\top  U_{ t}  {\beta_{t}}}{2 \tau}\Big)\prod_{i =1}^{| t|} \beta_{t_i}^{2r} \nonumber\\
	&&\qquad \times \,\,\frac{\lambda_2^{\lambda_1}}{\Gamma(\lambda_1)}\tau^{-\lambda_1 - 1} \exp\Big(-\frac{\lambda_2}\tau\Big)\, d {\beta_{t}} d\tau  \nonumber\\
	&\ge&  \frac{\lambda_2^{\lambda_1}}{\Gamma(\lambda_1)}d_{t}(2\pi)^{-|t|/2} | U_{t}| ^{\frac 1 2} \exp \big\{ L_n(\what{ \beta}_{t}) \big\}\nonumber\\
	&&\times \,\,\int \int \tau^{-r| t| - | t|/2 - \lambda_1 - 1}  \exp\Big(-\frac{\lambda_2}\tau\Big)\exp \bigg\{ -\frac{1}{2} ( {\beta_{t}} - { \beta}^*_{ t})^\top ( A^\prime_{ t} +  U_{ t}/\tau) ( {\beta_{t}} - { \beta}^*_{ t} ) \bigg\} \prod_{i =1}^{| t|} \beta_{t_i}^{2r}  \\
	&&\qquad \quad \times \,\, \exp \bigg\{ -\frac{1}{2} \what{ \beta}_{ t}^\top \big( A^\prime_{ t} -  A^\prime_{ t}( A^\prime_{ t} +  U_{ t}/\tau)^{-1}  A^\prime_{ t} \big)\what{ \beta}_{ t}   \bigg\}  \,d {\beta_{t}} d\tau\\
	&\ge&  \frac{\lambda_2^{\lambda_1}}{\Gamma(\lambda_1)}d_{t}(2\pi)^{-|t|/2} | U_{t}| ^{\frac 1 2} \exp \big\{ L_n(\what{ \beta}_{t}) \big\}(\log p)^{-2d - 2r|t|} \det(A^\prime_t)^{-1/2}\\
	&& \times \,\,  \int \tau^{-r|t| - |t|/2 - \lambda_1}  \exp\Big(-\frac{\lambda_2}\tau\Big) \exp\Big\{-\frac{|t|a_2}{2n(1+\epsilon)\lambda\tau}\Big\} \, d\tau\\
	&\ge&  C_4 \exp \big\{ L_n(\what{ \beta}_{t}) \big\} n^{-|t|/2} \det(n^{-1}A^\prime_t)^{-1/2} \lambda_2^{-(r|t| + |t|/2 - 1)} (\log p)^{-2d - 2r|t|},
	\eea
	by Lemma \ref{lem_aux1}, where $ A^\prime_{ t} = (1+\epsilon) H_n( \beta_{0, t})$.
	Therefore, with probability tending to 1, combining with \eqref{marginal_upper_supset}, for some constant $C^* > 0$,
	\bean \label{supset_upbound_ratio}
	\frac{m_{ k}( y)}{m_{ t}( y)}
	&\lesssim&  \big\{C^*n^{1/(2r+1)}\lambda\big\}^{-(r+ 1/2)(| k|-| t|)}  \exp \big\{ L_n(\what{ \beta}_{ k}) - L_n(\what{ \beta}_{ t})  \big\}\frac{\det (n^{-1}A^\prime_t)^{1/2} }{\det (n^{-1}A_k)^{1/2} } \nonumber\\
	&& \times \,\,  \Lambda_k^{|k|/2} \big(|k|^{1/r} \log p \big)^{r|k|} (\log p)^{2d + 2r|t|} \lambda_2^{-1} \nonumber\\
	&\lesssim& (C^*p)^{-(2+\delta/2) (|\bm k| - |\bm t|)} p^{ (1+\delta^*)(1+2w)(|\bm k| - |\bm t|)  },
	\eean
	for any $k \in M_1$, where the second inequality holds by Lemma 7.2 in \citep{Lee2019group} and by noting that it follows from Lemma 7.3 in \citep{Lee2019group},
	\bean\label{loglike_diff}
	L_n(\what{ \beta}_{ k}) - L_n(\what{ \beta}_{ t})  
	&\le& b_n (| k| - | t|)
	\eean
	for any $ k \in M_1$ with probability tending to 1, where $b_n = (1+\delta^*)(1+2w) \log p$ such that $1+\delta/2 > (1+\delta^*)(1+2w)$.
	
	Hence, with probability tending to 1, it follows from \eqref{supset_upbound_ratio} and \eqref{loglike_diff} that
	\begin{align} \label{supset_upbound_ratio_1}
	\sum_{ k\in M_1} PR( k,  t) = \sum_{ k \supset  t}\frac{\pi( k)m_{ k}( y)}{\pi( t)m_{ t}( y)}
	\le & \sum_{ k \supset  t}\frac{m_{ k}( y)}{m_{ t}( y)} \nonumber\\
	\lesssim & \sum_{| k| - | t| = 1}^{m_n - | t|} \binom {p-| t|}{| k|-| t|} (C^*p)^{-(1 + \delta^\prime) (| k| - | t|)},
	\end{align}
	for some constant $\delta^\prime > 0$. Using $\binom {p} {| k|} \le p^{| k|}$ and \eqref{supset_upbound_ratio_1}, we get
	\begin{align*}
	\sum_{ k\in M_1} PR( k,  t) \,\,=\,\, o(1).
	\end{align*}
	Thus, we have proved the desired result \eqref{sel_goal1}.
\end{proof}

\begin{proof}[Proof of Theorem \ref{thm:ratio}]
	Let $M_2 = \{ k :  k \nsupseteq  t,  | k|\le m_n \}$.
	For any $ k \in M_2$, let $ k^* =  k \cup  t$, so that $ k^* \in M_1$.
	Let $ \beta_{{ k^*}} $ be the $|{ k^*}|$-dimensional vector including $ {\beta_{k}}$ for $ k$ and zeros for ${ t \setminus  k}$.
	Then by Taylor's expansion and Lemmas A.1 and A.3 in \cite{Naveen:2018}, with probability tending to 1,
	\bea
	L_n ( \beta_{{ k^*}}) 
	&=& L_n ( \what{ \beta}_{{ k^*}} ) - \frac{1}{2} (  \beta_{{ k^*}} - \what{ \beta}_{{ k^*}} )^\top  H_n(\tilde{ \beta}_{{ k^*}})(  \beta_{{ k^*}} - \what{ \beta}_{{ k^*}} ) \\
	&\le& L_n ( \what{ \beta}_{{ k^*}} ) - \frac{1-\epsilon}{2} (  \beta_{{ k^*}} - \what{ \beta}_{{ k^*}} )^\top  H_n( { \beta}_{0,{ k^*}})(  \beta_{{ k^*}} - \what{ \beta}_{{ k^*}} )  \\
	&\le& L_n ( \what{ \beta}_{{ k^*}} ) - \frac{n (1-\epsilon)\lambda}{2} \| \beta_{{ k^*}} - \what{ \beta}_{{ k^*}}\|_2^2
	\eea
	for any $ \beta_{ k^*}$ such that $\| \beta_{{ k^*}} -  \beta_{0,{ k^*}} \|_2 \le c \sqrt{ | k^*| \Lambda_{| k^*|} \log p /n } = c w_n$ for some large constant $c>0$. Let $ B_{ k} = n(1-\epsilon)\lambda  I_{ k}$ and $ {\beta_{k}}^* = ( B_{ k} +  U_{ k}/\tau)^{-1}  B_{ k} \what{ \beta}_{ k}$. Define the set $B_* := \big\{  {\beta_{k}}:  \| \beta_{{ k^*}} - \what{ \beta}_{{ k^*}}\|_2 \le cw_n /2  \big\},$
	for some large constant $c>0$, then by similar arguments used for super sets, with probability tending to 1,
	\bea
	 \pi( k \mid  y)
	 &=& \int \int \exp \big\{L_n( {\beta_{k^*}}) \big\}  d_{ k}(2\pi)^{-| k|/2}\tau^{-r| k| - | k|/2} | U_{ k}| ^{\frac 1 2} \exp \Big(- \frac{ {\beta_{k}} ^\top  U_{ k}  {\beta_{k}}}{2 \tau}\Big)\prod_{i =1}^{| k|} \beta_{k_i}^{2r} \\
	 &\le&  \frac{\lambda_2^{\lambda_1}}{\Gamma(\lambda_1)}d_{ k}(2\pi)^{-| k|/2} | U_{ k}| ^{\frac 1 2} \exp \big\{ L_n(\what{ \beta}_{ k^*}) \big\}\\
	 &&\times\,\, \Bigg\{ \int \int_{B_*} N_1 \,d {\beta_{k}} d\tau  +\exp \big( -\frac{1-\epsilon}{8} c^2 \lambda | k^*| \Lambda_{| k^*|} \log p  \big) \int \int_{B_*^c} N_2\, d {\beta_{k}} d\tau \Bigg\},
	\eea
	where 
	\bea
	N_1 = \exp\Big(-\frac{\lambda_2}\tau\Big) \exp \bigg\{ - \frac{n(1-\epsilon)\lambda}{2} \|  \beta_{{ k^*}} - \what{ \beta}_{{ k^*}} \|_2^2 - \frac{ {\beta_{k}} ^\top  U_{ k}  {\beta_{k}}}{2 \tau}  \bigg\} \prod_{i =1}^{| k|} \beta_{k_i}^{2r} 
	\eea
	and 
	\bea
	M_2 = \tau^{-r| k| - | k|/2 - \lambda_1 - 1} \exp\Big(-\frac{\lambda_2}\tau\Big) \exp \bigg(- \frac{ {\beta_{k}} ^\top  U_{ k}  {\beta_{k}}}{2 \tau} \bigg) \prod_{i =1}^{| k|} \beta_{k_i}^{2r}.
	\eea
	Note that
	\bea
	&& \int \exp \Big\{  - \frac{n(1-\epsilon)\lambda}{2} \|  \beta_{{ k^*}} - \what{ \beta}_{{ k^*}} \|_2^2\Big\} \exp\Big(- \frac{ {\beta_{k}} ^\top  U_{ k}  {\beta_{k}}}{2 \tau}\Big)\prod_{i =1}^{| k|} \beta_{k_i}^{2r} d {\beta_{k}} \\
	&=& \int \exp \Big\{- \frac{n(1-\epsilon)\lambda}{2} \|  \beta_{{ k}} - \what{ \beta}_{{ k}} \|_2^2 - \frac{1}{2\tau}  {\beta_{k}} ^\top  U_{ k}  {\beta_{k}}  \Big\} \prod_{i =1}^{| k|} \beta_{k_i}^{2r} d {\beta_{k}} \, \exp \Big\{ - \frac{n(1-\epsilon)\lambda}{2} \|\what{ \beta}_{{ t\setminus  k}} \|_2^2  \Big\} \\
	&=& (2\pi)^{\frac{| k|}{2}} \det(B_{ k} +  U_{ k}/\tau)^{-1/2} \exp \left\{ -\frac{1}{2} \what{ \beta}_{ k}^\top \big( B_{ k} -  B_{ k}( B_{ k}+  U_{ k}/\tau)^{-1}  B_{ k} \big)\what{ \beta}_{ k}   \right\}E_{ k}\big(\prod_{i=1}^{| k|}\beta_{k_i}^{2r}\big)\\
	&&\times \,\,  \exp \Big\{ - \frac{n(1-\epsilon)\lambda}{2} \|\what{ \beta}_{ t\setminus  k} \|_2^2  \Big\}  .
	\eea
	where $E_k(.)$ denotes the expectation with respect to a multivariate normal distribution with mean $ {\beta_{k}}^*$ and 
	covariance matrix $ B_{ k}+  U_{ k}/\tau$. It follows from Lemma 6 in the supplementary material for \cite{johnson2012bayesian} that
	\bea
	E_{ k}\Big(\prod_{i=1}^{| k|}\beta_{k_i}^{2r}\Big) 
	&\le& \left(\frac{n\lambda + \tau^{-1}a_2 }{n\lambda +  \tau^{-1}a_1} \right)^{| k|/2}\left\{\frac{4V}{| k|} + \frac{4\left[(2r-1)!!\right]^{\frac 1 r}}{n(\lambda +  \tau^{-1}a_1)}\right\}^{r| k|}\\
	&\le& \exp\Big(\frac{a_2|k|}{2n\lambda\tau}\Big)2^{r| k| - 1}\left\{\left(\frac{4V}{| k|}\right)^{r| k|} + \left(\frac{4\left[(2r-1)!!\right]^{\frac 1 r}}{n\lambda}\right)^{r| k|}\right\},
	\eea
	where $V = \| {\beta_{k}}^*\|_2^2$. Therefore,
	for some constant $T_1>0$, we have
	\bean \label{integral_B*}
	\int \int_{B_*} N_1\, d {\beta_{k}} d\tau 
	&\le& \Big(T_1\frac{\log p}{| k|}\Big)^{r| k|} \exp \Big\{ - \frac{n(1-\epsilon)\lambda}{2} \|\what{ \beta}_{ t\setminus  k} \|_2^2  \Big\} \nonumber\\
	&&\times  \int \det \big(B_{k} + U_{k}/\tau\big)^{-1/2} \tau^{-r| k| - | k|/2 - \lambda_1 - 1}  \exp\bigg\{-\frac{\lambda_2 - a_2|k|/(2n\lambda)}\tau\bigg\} \,d\tau \nonumber\\
	&\le& T_1^{|k|}n^{-| k|/2}  \Big(\frac{\log p}{| k|}\Big)^{r| k|} \det(n^{-1}B_k)^{-1/2} \exp \Big\{ - \frac{n(1-\epsilon)\lambda}{2} \|\what{ \beta}_{ t\setminus  k} \|_2^2  \Big\} \nonumber\\
	&&\times \big\{\lambda_2 - a_2|k|/(2n\lambda)\big\}^{-(r|k| + |k|/2 + \lambda_1)} \Gamma\big(r|k| + |k|/2 + \lambda_1 \big).
	\eean
	Using similar arguments leading up to \eqref{marginal_upper_supset}, since the lower bound for $\pi( t\mid  y)$ can be derived as before, we obtain the following upper bound for the posterior ratio, for some  constants $T_2 > 0$ and $c > 0$,
	\bean
	 PR( k,  t) 
	&\lesssim& 
	\big\{T_2n^{1/(2r+1)}\lambda_2\big\}^{-(r + 1/2)(| k|-| t|)} 
	\frac{ \det \big\{ (1+\epsilon)n^{-1}  H_n( \beta_{0, t}) \big\}^{1/2} }{\det \big\{(1-\epsilon)\lambda  I_{ k} \big\}^{1/2}} \nonumber \\
	&&\times \,\, \exp \big\{ L_n(\what{ \beta}_{{ k^*}}) - L_n(\what{ \beta}_{ t})  \big\}  \,  \exp \Big\{ - \frac{n(1-\epsilon)\lambda}{2} \|\what{ \beta}_{{ t\setminus  k}} \|_2^2  \Big\} \label{M2_part1}\\
	&+& \big\{T_2n^{1/(2r+1)}\lambda_2\big\}^{-(r+1/2)(| k|-| t|)}  { \det \big\{ (1+\epsilon)n^{-1}  H_n( \beta_{0, t}) + (n\tau)^{-1} U_{ t}  \big\}^{1/2} } \nonumber \\
	&&\times \,\, \exp \big\{ L_n(\what{ \beta}_{{ k^*}}) - L_n(\what{ \beta}_{ t})  \big\} \exp \big\{  - c\, C | k^*| \Lambda_{| k^*|} \log p \big\}  \label{M2_part2}
	\eean
	for any $ k \in M_2$ with probability tending to 1.
	
	We first focus on \eqref{M2_part1}.
	Note that
	\bea
	\frac{ \det \big\{ (1+\epsilon)n^{-1}  H_n( \beta_{0, t}) \big\}^{1/2} }{\det \big\{(1-\epsilon)\lambda  I_{ k} \big\}^{1/2}} 
	&\le& \frac{  \big\{ (1+\epsilon) \Lambda_{| t|} \big\}^{| t|/2} }{ \big\{(1-\epsilon)\lambda \big\}^{| k|/2} } \\
	&\lesssim& \exp \big\{ T_3| t| \log  \Lambda_{| t|}   \big\} \Big\{ \frac{1}{(1-\epsilon)\lambda } \Big\}^{(|k|-|t|)/2}
	\eea
	for some constant $T_3>0$.
	Furthermore, by the same arguments used in \eqref{loglike_diff}, we have
	\bea
	L_n(\what{ \beta}_{{ k^*}}) - L_n(\what{ \beta}_{ t})  
	&\lesssim& C_*(|{ k^*}| - | t|)\log p  \\
	&=& C_*|{ t\setminus  k}|\log p +  C_*( | k|- | t|)\log p
	\eea
	for some constant $C_* >0$ and for any $ k \in M_2$ with probability tending to 1.
	Here we choose $C_* = (1+\delta^*)(1+2w)$ if $| k|> | t|$ or $C_* = 3+\delta$ if $| k| \le | t|$ so that
	\bea
	&&(T_2 p)^{-(1+\delta)(| k|-| t|)} \Big\{ \frac{1}{(1-\epsilon)\lambda } \Big\}^{(|k|-|t|)/2} p^{ C_* ( | k|- | t| )} \\
	&\lesssim& p^{ (C_* - 2 - \delta) ( | k|- | t| ) } \,\,=\,\,o(1),
	\eea
	where the inequality holds by Condition \hyperref[cond_A4]{\rm (A4)}.
	To be more specific, we divide $M_2$ into two disjoint sets $M_2' = \{ k :  k \in M_2 ,  | t| < | k| \le m_n \}$ and $M_2^* = \{ k :  k \in M_2 , | k| \le | t| \}$, and will show that $\sum_{ k \in M_2'} PR( k,  t)  + \sum_{ k \in M_2^*} PR( k,  t)\lra 0$ as $n\to\infty$ with probability tending to 1.
	Thus, we can choose different $C_*$ for $M_2'$ and $M_2^*$ as long as $C_* \ge (1+\delta^*)(1+2w)$.	
	On the other hand, with probability tending to 1, by Condition \hyperref[cond_A3]{\rm (A3)},
	\bea
	\exp \Big\{ - \frac{n(1-\epsilon)\lambda}{2} \|\what{ \beta}_{{ t\setminus  k}} \|_2^2  \Big\} 
	&\le&  \exp \Big[ - \frac{n(1-\epsilon)\lambda}{2}  \Big\{  \|  \beta_{0,{ t\setminus  k}} \|_2^2 - \|\what{ \beta}_{{ t\setminus  k}} -  \beta_{0,{ t\setminus  k}} \|_2^2 \Big\}  \Big]  \\
	&\le&  \exp \Big[ - \frac{n(1-\epsilon)\lambda}{2}  \Big\{  | t\setminus  k|^2 \min_{j \in  t}\|\beta_{0,j}\|_2^2 - c' w_n'^2 \Big\}  \Big]  \\
	&\le& \exp \Big\{ - \frac{(1-\epsilon)\lambda}{2}   \big( c_0 | t\setminus  k|^2 | t|    - c' |{ t\setminus  k}|  \big)  \Lambda_{| t|} \log p   \Big\}  \\
	&\le& \exp \Big\{  - \frac{(1-\epsilon)\lambda}{2}  ( c_0 - c') | t\setminus  k|^2 | t|  \Lambda_{| t|} \log p    \Big\}
	\eea
	for any $k \in M_2$ and some large constants $c_0> c'>0$, where $w_n'^2 = |{ t\setminus  k}| \Lambda_{| t\setminus  k|} \log p/n $.
	Here, $c'  = 5 \lambda^{-2} (1-\epsilon)^{-2}$ by the proof of Lemma A.3 in \cite{Naveen:2018}.
	
	Hence,  \eqref{M2_part1} for any $k \in M_2$ is bounded above by
	\bea
&& (T_2p)^{-(|k| - |t|)}  \exp\Big\{  | t| \log \Lambda_{| t|} + C_* |{ t\setminus  k}| \log p - \frac{(1-\epsilon)\lambda}{2}   (c_0- c') | t\setminus  k|^2 | t|  \Lambda_{| t|} \log p  \Big\}   \\
	&\lesssim&    \exp \Big\{ - \Big( \frac{(1-\epsilon)\lambda}{2}  (c_0- c')  - C_* - o(1) \Big) | t\setminus  k|^2 | t|  \Lambda_{| t|} \log p \Big\} \\
	&\le&    \exp \Big\{ - \Big( \frac{(1-\epsilon)\lambda}{2}  (c_0- c')  - C_*  - o(1) \Big) | t\setminus  k|^2 | t|  \Lambda_{| t|} \log p  \Big\} \\
	&\le&  \exp \Big\{ - \Big( \frac{(1-\epsilon)\lambda}{2}  (c_0- c')  - C_* - o(1) \Big)  | t|  \Lambda_{| t|} \log p  \Big\} 
	\eea
	with probability tending to 1, where the last term is of order $o(1)$ because we assume $c_0 = \frac{1}{(1-\epsilon_0)\lambda}\big\{2(3+\delta) + \frac{5}{(1-\epsilon_0)\lambda} \big\} > \frac{2}{(1-\epsilon)\lambda}(C_* + o(1)) + c'$.
	
	It is easy to see that the maximum \eqref{M2_part2} over $k \in M_2$ is also of order $o(1)$ with probability tending to 1 by  the similar arguments.
	Since we have \eqref{sel_goal1} in the proof of Theorem \ref{thm:nosuper}, it completes the proof.
\end{proof}

\begin{proof}[Proof of Theorem \ref{thm:selection}]
	Let $M_2 = \{ k :  k \nsupseteq  t,  | k|\le m_n \}$.
	Since we have Theorem \ref{thm:nosuper}, it suffices to show that 
	\bean\label{sel_cons_goal}
	\sum_{ k:  k \in M_2} PR( k,  t) &\overset{P}{\lra}& 0 \quad \text{ as } n\to\infty.
	\eean	
	By the proof of Theorem \ref{thm:ratio}, the summation of \eqref{M2_part1} over $k \in M_2$ is bounded above by
	\bea
	&&\sum_{ k\in M_2}(T_2\lambda_2)^{-r(|k| - |t|)} \exp \Big\{ -\Big( \frac{(1-\epsilon)\lambda}{2}  (c_0- c')  - C_* - o(1) \Big) | t\setminus  k|^2 | t|  \Lambda_{| t|} \log p    \Big\}   \\
	&\le& \sum_{| k|=0}^r \sum_{v = 0}^{(| t|-1) \wedge | k|} \binom{| t|}{v} \binom{p-| t|}{| k|-v} p^{-( | k|- | t| ) } \exp \Big\{ - \Big( \frac{(1-\epsilon)\lambda}{2}  (c_0- c')  - C_* - o(1) \Big)(| t|-v)^2 | t|  \Lambda_{| t|} \log p    \Big\} \\
	&\le& \sum_{| k|=0}^p \sum_{v = 0}^{(| t|-1) \wedge | k|}  \big( | t| p \big)^{| t| -v} \exp \Big\{ - \Big( \frac{(1-\epsilon)\lambda}{2}  (c_0- c')  - C_* - o(1) \Big) (| t|-v)^2 | t|  \Lambda_{| t|} \log p    \Big\} \\
	&\le& \exp \Big\{ - \Big( \frac{(1-\epsilon)\lambda}{2}  (c_0- c')  - C_* - o(1) \Big)  | t|  \Lambda_{| t|} \log p + 2(| t|+2) \log p    \Big\} \\
	&\le& \exp \Big\{ -   \Big( \frac{(1-\epsilon)\lambda}{2}  (c_0- c')  - C_* - 6 - o(1) \Big)  | t|  \Lambda_{| t|} \log p \Big\}
	\eea
	with probability tending to 1, where $C_* \le 3 +\delta$ is defined in the proof of Theorem \ref{thm:ratio}.
	Note that the last term is of order $o(1)$ because we assume $c_0 = \frac{1}{(1-\epsilon_0)\lambda}\big\{2(9 + 2\delta) + \frac{5}{(1-\epsilon_0)\lambda} \big\} > \frac{2}{(1-\epsilon)\lambda}(C_* + 6 + o(1)) + c'$.
	It is easy to see that the summation of \eqref{M2_part2} over $k \in M_2$ is also of order $o(1)$ with probability tending to 1 by  the similar arguments.
\end{proof}

\begin{lemma}\label{lem_aux1}
	Under Condition \hyperref[cond_A2]{\rm (A2)}, we have
	\bea
	\exp \big\{ \frac{1}{2} \what{ \beta}_{ t}^\top \big(  A^\prime_{t} -  A^\prime_{t}( A^\prime_{t}+\tau^{-1} U_{ t})^{-1}  A^\prime_{t} \big)\what{ \beta}_{ t}   \big\}  
	&\lesssim& \tau^{-1}(\log p)^{2d}
	\eea
	with probability tending to 1.	
\end{lemma}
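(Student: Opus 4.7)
The plan is to reduce the quadratic form to something of order $1/\tau$ via matrix algebra, bound $\|\what{\beta}_t\|_2^2$ using the standard MLE concentration result for the true model, and then exponentiate.

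For the algebraic step, apply the Woodbury-type identity $A - A(A+B)^{-1}A = (A^{-1}+B^{-1})^{-1}$ with $A = A^\prime_t$ and $B = \tau^{-1}U_t$ to rewrite
\[
M_\tau \;:=\; A^\prime_t - A^\prime_t(A^\prime_t+\tau^{-1}U_t)^{-1}A^\prime_t \;=\; \bigl((A^\prime_t)^{-1} + \tau U_t^{-1}\bigr)^{-1}.
\]
Since $M_\tau^{-1} \succeq \tau U_t^{-1}$ in the Loewner order, $M_\tau \preceq \tau^{-1}U_t$, and hence $\what{\beta}_t^\top M_\tau \what{\beta}_t \le a_2\, \tau^{-1}\|\what{\beta}_t\|_2^2$, where the constant $a_2$ is the spectral upper bound on $U$ from Condition~(A4).

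To control $\|\what{\beta}_t\|_2^2$, I would use the triangle inequality together with Lemma A.3 of \cite{Naveen:2018}, which gives $\|\what{\beta}_t - \beta_{0,t}\|_2^2 \lesssim |t|\Lambda_{|t|}\log p/n = o(1)$ with probability tending to $1$, since $|t|$ is fixed and $\Lambda_{|t|}\lesssim (\log p)^d$. Combined with $\|\beta_{0,t}\|_2^2 \lesssim (\log p)^d$ from Condition~(A2), this yields $\tfrac{1}{2}\what{\beta}_t^\top M_\tau \what{\beta}_t \le C(\log p)^d/\tau$ on the high-probability event.

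The main obstacle is the final exponential-to-polynomial conversion. The inequality $\exp\bigl(C(\log p)^d/\tau\bigr) \le C^\prime \tau^{-1}(\log p)^{2d}$ cannot hold pointwise for all $\tau>0$: the left side blows up as $\tau \to 0$, and the right side vanishes faster than the left as $\tau \to \infty$. However, in the sole place Lemma~\ref{lem_aux1} is invoked --- inside the $\tau$-integral lower-bounding $m_t(y)$ in the proof of Theorem~\ref{thm:nosuper} --- the weight $\exp(-\lambda_2/\tau)$ effectively restricts the integration to $\tau$ of order $\lambda_2$, and Condition~(A4) forces $\lambda_2 \gg (\log p)^d$. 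In this effective regime $C(\log p)^d/\tau = o(1)$, so $\exp(C(\log p)^d/\tau) = 1 + o(1)$ and a direct comparison delivers the stated bound. I would therefore prove the lemma under this effective-range reading, which is all that is needed for the application.
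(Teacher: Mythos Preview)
Your matrix reduction differs from the paper's. The paper does not invoke the Woodbury identity; instead it uses the eigenvalue comparison $U_t \le a_2(n\lambda)^{-1}A'_t$ (from Condition~(A2)) to deduce $(A'_t+\tau^{-1}U_t)^{-1}\succeq \frac{n\tau\lambda}{n\tau\lambda+a_2}(A'_t)^{-1}$, which gives
\[
\tfrac{1}{2}\what{\beta}_t^\top M_\tau \what{\beta}_t \le \frac{a_2}{2(n\tau\lambda+a_2)}\what{\beta}_t^\top A'_t\what{\beta}_t,
\]
and then bounds the right-hand side by $\tau^{-1}\lambda_{\max}(n^{-1}H_n(\beta_{0,t}))\|\what{\beta}_t\|_2^2 \lesssim \tau^{-1}(\log p)^{2d}$. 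Your Woodbury route $M_\tau=(A_t'^{-1}+\tau U_t^{-1})^{-1}\preceq \tau^{-1}U_t$ is shorter and in fact yields the sharper exponent bound $\tau^{-1}(\log p)^d$, since it avoids picking up the extra $\lambda_{\max}(n^{-1}H_n)\lesssim(\log p)^d$ factor. Either reduction is perfectly adequate for the downstream application.

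On the final step you are more careful than the paper. The paper's proof literally writes ``we complete the proof if we show $\frac{1}{n\tau}\what{\beta}_t^\top H_n(\beta_{0,t})\what{\beta}_t \lesssim \tau^{-1}(\log p)^{2d}$'' and then stops after bounding the exponent --- it never justifies the passage from the exponent bound to the claimed bound on the exponential, and as you observe, that passage cannot hold pointwise in $\tau$. Your diagnosis is correct: the lemma as stated is only meaningful on the effective $\tau$-range singled out by the weight $\exp(-\lambda_2/\tau)$ in the $m_t(y)$ integral, where $\lambda_2\gg(\log p)^d$ under Condition~(A4) makes the exponent $o(1)$. Proving it under that reading is exactly what the application in Theorem~\ref{thm:nosuper} requires, so your plan is the right repair.
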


\begin{proof}
	Note that, by Condition \hyperref[cond_A2]{\rm (A2)},
	\bea
	( A^\prime_{t} + \tau^{-1}  U_{ t})^{-1} &\ge& ( A^\prime_{t} + a_2(n\tau \lambda)^{-1}  A^\prime_{t})^{-1} \,\,=\,\, \frac{n\tau \lambda}{n\tau \lambda+a_2}  {A^\prime_{t}}^{-1},
	\eea
	which implies that
	\bea
	\frac{1}{2} \what{ \beta}_{ t}^\top \big(  A^\prime_{t} -  A^\prime_{t}( A^\prime_{t}+\tau^{-1 } U_{ t})^{-1}  A^\prime_{t} \big)\what{ \beta}_{ t}  
	&\le& \frac{a_2}{2(n\tau \lambda+a_2)} \what{ \beta}_{ t}^\top  A^\prime_{t} \what{ \beta}_{ t}   .
	\eea
	Thus, we complete the proof if we show that
	\bea
	 \frac{1}{ n\tau } \what{ \beta}_{ t}^\top  H_n( \beta_{0, t}) \what{ \beta}_{ t} 
	 &\lesssim& \tau^{-1}(\log p)^{2d}
	\eea
	for any $t\in M_1$ with probability tending to 1.
	By Lemma A.3 in \cite{Naveen:2018} and Condition \hyperref[cond_A2]{\rm (A2)},
	\bea
	 \frac{1}{n \tau}\what{ \beta}_{ t}^\top  H_n( \beta_{0, t}) \what{ \beta}_{ t}
	 &\le& \frac{1}{\tau} \lambda_{\max}\big\{ n^{-1}  H_n( \beta_{0, t}) \big\} \|\what{ \beta}_{ t}\|_2^2 \\
	 &\le& \frac{1}{\tau} (\log p)^d \big( \| \beta_{0, t}\|_2^2  + o(1) \big) 
	 \,\,=\,\, O\big(\tau^{-1}(\log p)^{2d}\big)
	\eea
	with probability tending to 1.
\end{proof}

\bibliographystyle{plainnat} 
\bibliography{references}
\end{document}